\newtheorem{theorem}{Theorem}
\def\eq#1{Eq.~(\ref{#1})}
\newcommand{\secn}[1]{Section~\ref{#1}}
\newcommand{\be}{\begin{equation}}
\newcommand{\ee}{\end{equation}}
\newcommand{\beq}{\begin{eqnarray}}
\newcommand{\eeq}{\end{eqnarray}}
\newcommand{\bea}{\begin{eqnarray}}
\newcommand{\eea}{\end{eqnarray}}
\newcommand{\nn}{\nonumber}
\title{Integration-by-parts identities and differential equations for 
parametrised Feynman integrals} 
\author[a]{Daniele Artico,}
\author[b]{Lorenzo Magnea,}
\affiliation[a]{Institut f\"ur Physik, Humboldt-Universit\"at zu Berlin, \\
Newtonstra\ss e 15, D-12489, Berlin, Germany}
\affiliation[b]{Dipartimento di Fisica, Universit\`a di Torino, 
and INFN, Sezione di Torino, \\ Via Pietro Giuria 1, I-10125 Torino, Italy}
\emailAdd{daniele.artico@physik.hu-berlin.de}
\emailAdd{lorenzo.magnea@unito.it}
\abstract{Integration-by-parts (IBP) identities and differential equations are 
the primary modern tools for the evaluation of high-order Feynman integrals. 
They are commonly derived and implemented in the momentum-space 
representation. We provide a different viewpoint on these important tools 
by working in Feynman-parameter space, and using its projective geometry. 
Our work is based upon little-known results pre-dating the modern era of loop calculations~\cite{Regge:1968rhi,Ponzano:1969tk,Ponzano:1970ch,Regge:1972ns,
Barucchi:1973zm,Barucchi:1974bf}: we adapt and generalise these results, 
deriving a very general expression for sets of IBP identities in parameter 
space, associated with a generic Feynman diagram, and valid to any loop 
order, relying on the characterisation of Feynman-parameter integrands as 
projective forms. We validate our method by deriving and solving systems of 
differential equations for several simple diagrams at one and two loops, 
providing a unified perspective on a number of existing results.}
\begin{document}

\maketitle
  

\section{Introduction}
\label{Intro}

The calculation of high-order Feynman integrals is the cornerstone of the precision
physics program at present and future particle accelerators~\cite{Heinrich:2020ybq}.
The systematic development of modern methods to compute Feynman integrals, 
going beyond a direct evaluation of their parametric expression, began with the 
identification and explicit construction of Integration-by-Parts (IBP) identities in
dimensional regularisation, in Refs.~\cite{Tkachov:1981wb,Chetyrkin:1981qh},
and reached a further degree of sophistication with the development of the method
of differential equations~\cite{Kotikov:1990kg,Remiddi:1997ny,Gehrmann:1999as}.
These two sets of ideas can be combined into powerful algorithms~\cite{Laporta:2000dsw},
and the procedure further streamlined and optimised by the identification of the
linear functional spaces where (classes of) Feynman integrals live~\cite{Duhr:2011zq,
Duhr:2019wtr,Abreu:2022mfk}, and by taking maximal advantage of dimensional
regularisation~\cite{Henn:2013pwa}. The combined use of these tools has dramatically
extended the range of processes for which high-order calculations are available, 
and has broadened our understanding of the mathematics of Feynman integrals,
as reviewed for example in~\cite{Heinrich:2020ybq,Weinzierl:2022eaz}.

It is an interesting historical fact that the idea of studying and eventually computing 
Feynman integrals by means of IBPs and differential equations pre-dates all the 
developments just discussed, and was originally proposed not in the momentum 
representation, but in Feynman-parameter space. Of course, it is well-known that 
studies of Feynman diagrams flourished in the $S$-matrix era, as illustrated in 
the classic textbook~\cite{Eden:1966dnq}. In particular, the projective nature of
Feynman parameter integrands, and the importance of the monodromy properties 
of Feynman integrals under analytic continuation around their singularities, were
soon uncovered, and attracted the attention of mathematicians~\cite{Pham,Lascoux}
and physicists~\cite{Regge:1968rhi}. In this context, Tullio Regge and collaborators
published a series of papers~\cite{Ponzano:1969tk,Ponzano:1970ch,Regge:1972ns}  
studying the `monodromy ring' of interesting classes of Feynman graphs: first
the ones we would at present describe as `multi-loop sunrise' graphs in 
Ref.~\cite{Ponzano:1969tk}, then generic one-particle irreducible $n$-point 
one-loop graphs in Ref.~\cite{Ponzano:1970ch}, and finally the natural 
combination of these two classes, in which each propagator of the one-loop 
$n$-point diagram is replaced by a $k$-loop sunrise~\cite{Regge:1972ns}. 
All of these papers employ the parameter representation as a starting point, 
and make heavy use of the projective nature of the integrand.

At the time, these studies by Regge and collaborators did not immediately 
yield computational methods, but it is interesting to notice that, at least at the 
level of conjectures, several deep insights that have emerged in greater detail
in recent years were already present in the old literature. For example Regge,
in Ref.~\cite{Regge:1968rhi}, argues, on the basis of homology arguments, that 
all Feynman integrals must belong to a suitably generalised class of hypergeometric 
functions, an insight that was sharpened much more recently with the introduction 
of the Lee-Pomeransky representation~\cite{Lee:2013hzt} of Feynman integrals 
and the application of the GKZ theory of hypergeometric functions~\cite{GKZ1,
GKZ2,Klausen:2019hrg,Feng:2019bdx,Klausen:2021yrt,Klausen:2023gui}. 
Regge further argues that such functions obey sets of (possibly) high-order 
differential equations, which he describes as `a slight generalisation of the 
well-known Picard-Fuchs equations', also a recurrent theme~\cite{Lairez:2022zkj}.

While general algorithms were not developed at the time, two of Regge's collaborators,
Barucchi and Ponzano, were able to construct a concrete application of the general 
formalism for one-loop diagrams~\cite{Barucchi:1973zm,Barucchi:1974bf}. In those
papers, they show that for one-loop diagrams it is always possible to organise the 
relevant Feynman integrals into sets (that we would now call `families'), and find 
a system of linear homogeneous differential equation in the Mandelstam invariants
that closes on these sets, with the maximum required size of the system being 
$2^{n}-1$ for graphs with $n$ propagators\footnote{This counting has been reproduced
with modern (and more general) methods in \cite{Bitoun:2017nre,Bitoun:2018afx,
Mizera:2021icv}.}. These systems of differential equations were of interest 
to Barucchi and Ponzano because they effectively determine the singularity structure of 
the solutions, and thus the monodromy ring, in agreement with the general results of
Regge's earlier work. From a modern viewpoint, it is perhaps just as interesting
to use the system directly for the evaluation of the integrals, as done with
the usual momentum-space approach: this is the direction that we will pursue in
our exploratory study.

In the present paper, we start from the ideas of Refs.~\cite{Regge:1968rhi,Ponzano:1969tk,
Ponzano:1970ch,Regge:1972ns} and the concrete results of Barucchi and 
Ponzano~\cite{Barucchi:1973zm,Barucchi:1974bf} to propose a projective framework 
to derive IBP identities and systems of linear differential equations for Feynman integrals. 
In order to do so, we need to generalise the Barucchi-Ponzano results in several directions. 
First of all, those results predate the widespread use of dimensional regularisation, and 
do not in principle apply directly to infrared-divergent integrals. Fortunately, the projective 
framework naturally involves the (integer) powers of the propagators appearing in the 
diagram. These can be continued to complex values, providing a regularisation that is 
readily mapped to dimensional regularisation\footnote{Regge and collaborators also use 
this regularisation, having in mind mostly ultraviolet divergences, since the framework 
at the time was constructed for generic massive particles. They refer to the complex
values of the powers of the propagators as `Speer parameters', whereas we would 
now refer to this procedure as analytic regularisation.}. We are then able to show that 
the projective framework applies directly to IR divergent integrals, and we provide some 
examples. Next, we observe that the procedure to derive IBP identities in projective space 
generalises naturally to higher loops. Clearly, at two loops and beyond it would be of 
paramount interest to have a generalisation of the Barucchi-Ponzano theorem, guaranteeing 
the closure of a system of linear differential equations, providing an upper limit for its 
size, and giving a constructive procedure to build the system. This would require a much 
deeper understanding of the monodromy ring of higher-loop integrals. Lacking this 
knowledge (a gap which certainly points to promising avenues for future research), 
we can nonetheless apply the parameter-space IBP technique, and derive directly 
sets of differential equation on a case-by-case basis. Indeed, we show that the method 
can be successfully applied to two-loop integrals, and we provide examples, including 
the two-loop equal-mass sunrise, for which we recover the appropriate elliptic differential 
equation. Finally, we note that our application of the projective framework highlights 
the importance of boundary terms in IBP identities: contrary to the momentum-space 
approach in dimensional regularisation, boundary terms do not in general vanish in 
the projective framework: on the contrary, they may play an important role in linking 
complicated integrals to simpler ones, as we will see in concrete examples.

We note that the work presented in this paper is part of a recent revival of interest 
in the mathematical structure of Feynman integrals in parameter space, and presents
interesting potential connections to several current research topics in this context,
including intersection theory~\cite{Mastrolia:2018uzb,Frellesvig:2019kgj,Frellesvig:2019uqt,
Frellesvig:2020qot,Chestnov:2022xsy}, the concept of parametric annihilator~\cite{Bitoun:2017nre}, 
the use of syzygy relations in reduction algorithms~\cite{Agarwal:2020dye,Wang:2023nvh},
the study of generalised hypergeometric systems~\cite{Munch:2022ouq}, and the
reduction of tensor integrals in parameter space~\cite{Chen:2019mqc,Chen:2019fzm,
Chen:2020wsh}. More generally, for the first time in several decades we are witnessing
a rapid growth of our understanding of the mathematical properties of Feynman integrals, 
in particular with regards to analiticity and monodromy (see, for example,~\cite{Bourjaily:2020wvq,
Mizera:2021icv,Hannesdottir:2022bmo,Hannesdottir:2022xki}, and the lectures in 
Ref.~\cite{Mizera:2023tfe}), with potential applications to questions of phenomenological 
interest, such as the study of infrared singularities~\cite{Arkani-Hamed:2022cqe} and
the development of efficient methods of numerical integration~\cite{Borinsky:2020rqs,
Borinsky:2023jdv}.

The structure of our paper is the following. In \secn{Notat} we introduce our notation for 
the parameter representation of Feynman integrals and for Symanzik polynomials, briefly 
reviewing well-known material for the sake of completeness. In \secn{ProFo} we introduce
projective forms, and we use their differentiation and integration to lay the groundwork 
for the construction of IBP identities for generic projective integrals. In \secn{FeyPr} we
specialise our discussion to Feynman integrals, and give a general procedure to construct 
IBP identities in this case. In \secn{ExpEx} and in \secn{Twolex} we validate our results 
by discussing several concrete examples at one and two loops. Four Appendices 
give some further technical details on these examples.  Finally, in \secn{AssPer}
we present an assessment of our results and perspectives for future work. 


\section{Notations for parametrised Feynman integrals}
\label{Notat}

In this section we summarise some well-known basic properties of parametrised 
Feynman integrals, which will be useful in what follows. We adopt the notations 
of Refs.~\cite{Bogner:2007mn,Bogner:2010kv,Weinzierl:2022eaz}.

Consider a connected Feynman graph $G$, with $l$ loops, $n$ internal lines carrying 
momenta $q_i$ and masses $m_i$ ($i = 1, \ldots, n$), and $m$ external lines carrying 
momenta $p_j$ ($j = 1, \ldots, m$). At this stage we do not need to impose restrictions 
on external masses, so $p_j^2$ is unconstrained. On the other hand, momentum is 
conserved at all vertices of $G$, so one can parametrise the graph assigning $l$ 
independent loop momenta $k_r$ ($r = 1, \ldots, l$) to suitable edges of the graph. 
The line momenta are then given by
\beq 
  q_i \, = \, \sum_{r = 1}^l \alpha_{ir} k_r + \sum_{j = 1}^m \beta_{ij} p_j \, ,
\label{Incidence}
\eeq
where the elements of the {\it incidence matrices}, $\alpha_{ir}$ and $\beta_{ij}$, take values
in the set $\{-1,0,1\}$. Working in $d$ dimensions, with $d = 4 - 2 \epsilon$, and allowing for
the possibility of raising propagators to integer powers $\nu_i$ ($i = 1, \ldots, n$), one may 
associate to each graph $G$ a family of (scalar) Feynman integrals
\beq
  I_G \left( \nu_i, d \right) \, = \, (\mu^2)^{\nu - l d/2} \int \prod_{r = 1}^{l} \frac{d^d k_r}{{\rm i} 
  \pi^{d/2}} \, \prod_{i = 1}^n \frac{1}{\left( - q_i^2 + m_i^2 \right)^{\nu_i}} \, , 
\label{FeyInt}
\eeq
where we defined $\nu \equiv \sum_{i = 1}^n \nu_i$, and the integration must be performed 
by circling the poles in the complex plane of the loop energy variables according to Feynman's 
prescription. 

The integration over loop momenta in \eq{FeyInt} can be performed in full generality by means 
of the Feynman parameter technique, using the identity
\beq
\label{FPT}
  \prod_{i = 1}^n \frac{1}{\left (- q_i^2 + m_i^2 \right)^{\nu_i} } \, = \, 
  \frac{\Gamma(\nu)}{\prod_{j = 1}^n \Gamma(\nu_j)} \int_{z_j \geq 0} d^n z \,
  \delta \left(1 - \sum_{j = 1}^n z_j \right) \, \frac{ \prod_{j = 1}^n z_j^{\nu_j -1}}{\left( 
  \sum_{j = 1}^n z_j \left( -q_j^2 + m_j^2 \right) \right)^{\! \nu}} \, .
\eeq
By virtue of \eq{Incidence}, the sum in the denominator of the integrand in \eq{FPT} is a 
quadratic form in the loop momenta $k_r$, and can be written as 
\beq
\label{Mdef}
  \sum_{j = 1}^n z_j \left( - q_j^2 + m_j^2 \right) \, = \, - \sum_{r,s = 1}^l  M_{rs} \, k_r  \cdot k_s + 
  2 \sum_{r = 1}^l k_r \cdot Q_r + J \, ,
\eeq
where $M$ is an $l \times l$ matrix with dimensionless entries which are linear in the Feynman 
parameters $z_i$, $Q$ is an $l$-component vector whose entries are linear combinations of 
the external momenta $p_j$, and $J$ is a linear combination of the Mandelstam invariants
$p_i \cdot p_j$ and the squared masses $m_j^2$. Translational invariance of $d$-dimensional 
loop integrals allows to complete the square in \eq{Mdef}: the integral over loop momenta can 
then be performed, leading to 
\beq
\label{FP}
  I_G \left( \nu_i, d \right) \, = \, \frac{\Gamma(\nu - l d/2)}{\prod_{j = 1}^n \Gamma(\nu_j)}
  \int_{z_j \geq 0} d^n z \, \delta \left(1 - \sum_{j = 1}^n z_j \right) 
  \left( \prod_{j = 1}^n z_j^{\nu_j -1}\right) \,
  \frac{\mathcal{U}^{\, \nu - (l+1) d/2}}{\mathcal{F}^{\, \nu - l d/2}} \, ,
\eeq
where the functions
\beq
\label{Sym pol}
  \mathcal{U} \, = \, \mathcal{U} (z_i) \, = \, \det M \, , \qquad \quad 
  \mathcal{F} \, = \, \mathcal{F} \left( z_i, \frac{p_i \cdot p_j}{\mu^2}, \frac{m_i^2}{\mu^2}
  \right) \, = \, \det M \, \left( J + Q M^{-1} Q \right)/\mu^2 \, ,
\eeq
are called graph polynomials or Symanzik polynomials. References \cite{Bogner:2007mn,
Bogner:2010kv,Weinzierl:2022eaz} discuss in detail the properties of graph polynomials:
here we only note that both polynomials are homogeneous in the set of Feynman parameters,
$z_i$, with ${\cal U}$ being of degree $l$ and ${\cal F}$ of degree $l+1$; furthermore, both 
polynomials are linear in each Feynman parameter, with the possible exception of terms
proportional to squared masses in ${\cal F}$. These homogeneity properties set the stage
for employing the tools of projective geometry, as discussed below in \secn{ProFo}.

Remarkably, Symanzyk polynomials can be constructed directly from the connectivity
properties of the underlying Feynman graph. To do so, let us denote by $\mathcal{I}_G$
the set of the internal lines of $G$, each endowed with a Feynman parameter $z_i$.
A \textit{co-tree} $\mathcal{T}_G \subset \mathcal{I}_G$ is a set of internal lines of $G$ 
such that that the lines in its complement $\overline{\mathcal{T}}_G \subset \mathcal{I}_G$ 
form a spanning tree, {\it i.e.} a graph with no closed loops which contains all the vertices 
of $G$. The first Symanzik polynomial for the graph $G$ is then given by
\beq
  \label{U}
  \mathcal{U} \, = \, \sum_{\mathcal{T}_G} \prod_{i \in \mathcal{T}_G} z_i \, .
\eeq
Note that, in the case of an $l$-loop graph, one needs to omit precisely $l$ lines in order 
obtain a spanning tree: the polynomial ${\cal U}$ is therefore homogeneous of degree $l$, 
as announced. Similarly, we can consider subsets ${\cal C}_G \subset {\cal I}_G$ with the
property that, upon omitting the lines of ${\cal C}_G$ from $G$, the graph becomes a
disjoint union of two connected subgraphs. Clearly, each subset ${\cal C}_G$ defines
a {\it cut} of graph $G$, and contains $l+1$ lines. One may further associate with each
cut the invariant mass $s \left({\cal C}_G \right)$, obtained by squaring the sum of the 
momenta flowing in (or out) one of the two subgraphs -- by momentum conservation, 
it does not matter which subgraph we choose. The second Symanzik polynomial is 
then defined by 
\beq
\label{F}
  \mathcal{F} \, = \, \sum_{\mathcal{C}_G} \frac{\hat{s} \left( \mathcal{C}_G \right)}{\mu^2} \, 
  \prod_{i \in \mathcal{C}_G} z_i \, - \, \mathcal{U} \sum_{i \in \mathcal{I}_G} 
  \frac{m_i^2}{\mu^2} \, z_i \, .
\eeq
As expected, ${\cal F}$ is homogeneous of degree $l+1$ in the Feynman parameters.

To illustrate these rules, consider the one-loop box diagram depicted in Fig. 1a. As for any
one-loop diagram, it is immediate to see that the first Symanzik polynomial is simply
the sum of the Feynman parameters associated with the loop propagators. In this case
\beq
\label{Ubox}
  \mathcal{U} \, = \, z_1 + z_2 + z_3 + z_4 \, .
\eeq
The second Symanzik polynomial depends on kinematic data. If for example one picks 
massless on-shell external legs, all cuts involving two adjacent propagators vanish. One 
is then left with the Cutkosky cuts in the $s$ and $t$ channels. Defining $s = (p_1 + p_4)^2$ 
and $t = (p_1 + p_2)^2$ (with all momenta incoming), and assuming all internal masses 
to be the same, one finds
\beq
\label{FboxM}
  \mathcal{F} \, = \, \frac{s}{\mu^2} \, z_1 z_3 + \frac{t}{\mu^2} \, z_2 z_4 - 
  \frac{m^2}{\mu^2} \, \left( z_1 + z_2 + z_3 + z_4 \right)^2 \, .
\eeq
\begin{figure}[!h]
\vspace{-2mm}
\centering
\includegraphics[scale=.7]{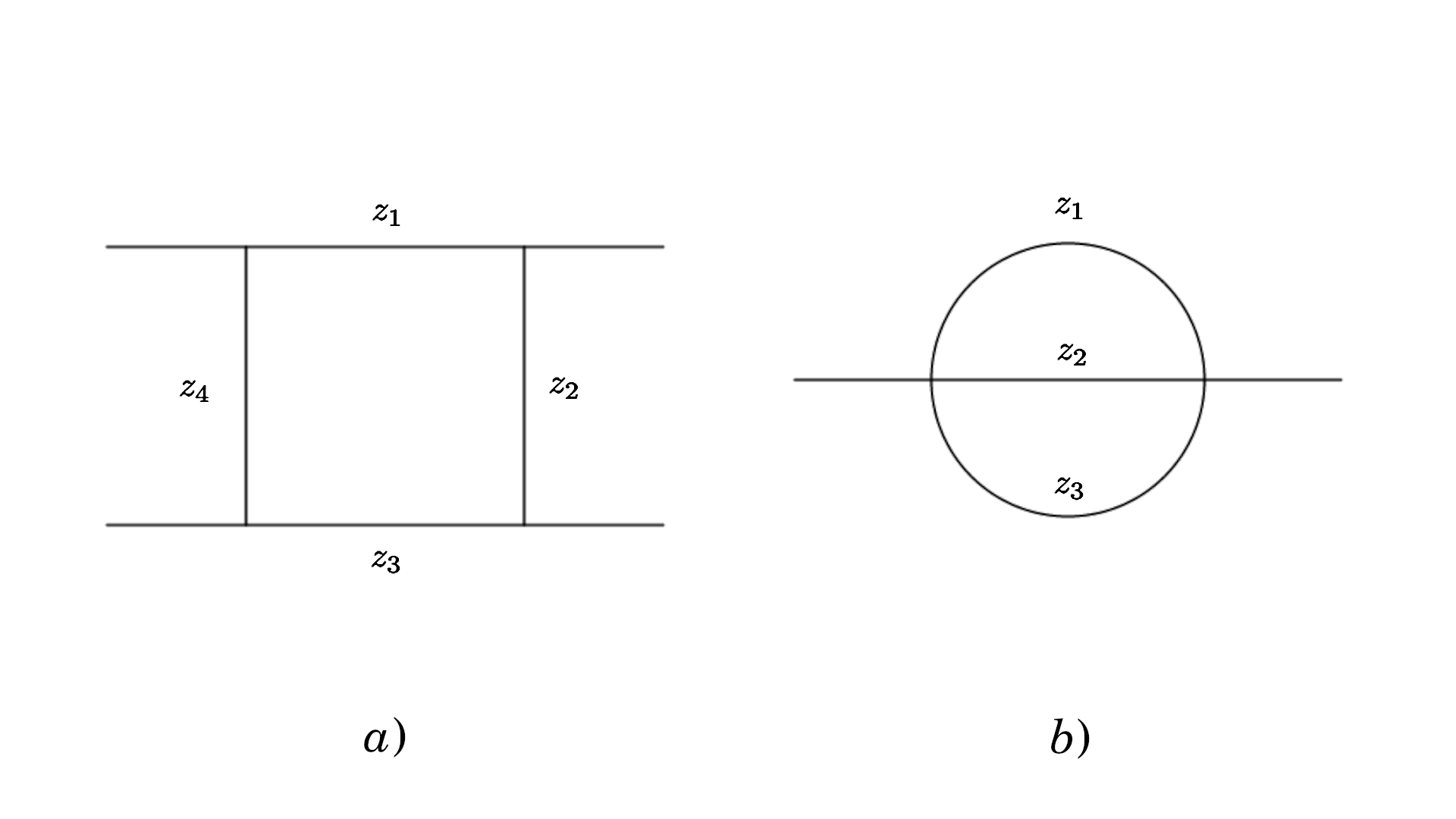}
\caption{a) One-loop box diagram b) Two-loop sunrise diagram}
\end{figure}
At two loops, one may consider the sunrise diagram in Fig. 1b. In this case, each internal 
line is a spanning tree (complementary to a co-tree). This implies that the first Symanzik 
polynomial is 
\beq
\label{U2}
  \mathcal{U} \, = \, z_1 z_2 + z_2 z_3 + z_1 z_3 \, .
\eeq
The cut-dependent part of the second graph polynomial is similarly straightforward, since 
only one cut exists. Taking equal internal masses, and denoting by $p^2$ the invariant 
mass of the incoming momentum, the second Symanzik polynomial is thus
\beq
\label{F2}
  \mathcal{F} \, = \, \frac{p^2}{\mu^2} \, z_1 z_2 z_3 - \frac{m^2}{\mu^2} \left( z_1 z_2 + 
  z_2 z_3 + z_1 z_3 \right) \left( z_1 + z_2 + z_3 \right) \, .
\eeq
In what follows, the crucial property of \eq{FP} is the {\it projective} nature of the integrand.
Indeed, one easily verifies that a change of variables of the form $z_i \to \lambda z_i$,
with $\lambda > 0$, leaves the integrand invariant, except for the argument of the $\delta$ 
function. Since a change of variables cannot affect the integral, we see that one should 
properly look at \eq{FP} as the integral of a projective form over the $(n-1)$-dimensional 
space $\mathbb{PR}^{n-1}$. This statement will be further substantiated in the next 
sections: in \secn{ProFo} we will show some technology concerning such integrals, 
which will then lead to a general integration-by-parts formula for Feynman parameter 
integrals in \secn{FeyPr}.


\section{Projective forms}
\label{ProFo}

In this section, we present a brief introduction to projective forms and to their integration 
and differentiation. Since the section is somewhat formal, it is useful to keep in mind from 
the beginning the announced correspondence between projective forms and Feynman-parameter 
integrands, which we will try to highlight with explicit examples.


\subsection{Preliminaries}
\label{Preli}

Let us begin by considering the Grassman algebra of exterior forms in the differentials $dz_i$, 
where $i \in D \equiv \left \{ 1, 2, ..., N \right\}$, for a positive integer $N$. Let $A$ 
be a subset of $D$, of cardinality $|A| = a$, and let $\omega_A$ be its ordered volume form
\beq
\label{wA}
  \omega_A \, = \, d z_{i_1} \wedge ... \wedge dz_{i_a} \, ,
\eeq
with $i_j \in A $, and $i_1 < i_2 < ... < i_a$: for example, if $A = \left\{ 1,2,3\right\}$, 
then $\omega_A = dz_{1} \wedge dz_2 \wedge dz_{3}$. The volume form $\omega_A$ can
be `integrated' by defining
\beq
\label{eta}
  \eta_A \, = \, \sum_{i \in A} \epsilon_{i, A - i} \, z_i  \, \omega_{A - i}
\eeq
where $A - i$ denotes the set $A$ with $i$ omitted, and we defined the signature factor 
$\epsilon_{k, B}$, for any $B \subseteq D$, and for any $k \notin B$, by means of
\beq
\label{epsilon}
  \epsilon_{k,B} \, = \, (-1)^{|B_k|} \, ,  \qquad   B_k \, = \, \left\{ i \in B, i < k \right\} \, ,
\eeq 
while $\epsilon_{k,B} = 0$ if $k \in B$. Using the properties of the boundary operator $d$,
one easily verifies that the differential of $\eta_A$ is proportional to $\omega_A$. Indeed 
\beq
\label{d eta}
  d \eta_A \, = \, a \, \omega_A \, .
\eeq
As an example, consider again $A = \left\{ 1, 2, 3 \right\}$: the form $\eta_A$ is 
then given by
\beq
\label{eta example}
  \eta_{ \{1,2,3\} } \, = \, z_1 \, dz_2 \wedge dz_3 - z_2 \, dz_1 \wedge dz_3 + z_3 \, dz_1 
  \wedge dz_2 \, ,
\eeq
and its differential in fact is equal to $3 \, dz_1\wedge dz_2 \wedge dz_3$. Consider next 
{\it affine} $q$-forms, defined by
\beq
\label{Affine}
  \psi_q \, = \, \sum_{|A| = q} R_A (z_i) \, \omega_A \, ,
\eeq
where $R_A$ is a homogeneous rational function\footnote{Note that this function may 
depend on external parameters as well, in our case representing kinematic invariants of
the diagram under consideration. Note also that, in order to make room for dimensional 
regularisation, we will slightly generalise this definition to include polynomial factors raised
to non-integer powers in both the numerator and the denominator of the functions $R_A$,
while preserving the homogeneity requirement.} of the variables $z_i$ with degree 
$- |A|  = - q$. The name affine form is a reference to their invariance under dilatations 
of all variables. \eq{Affine} is readily seen to imply that also the $(q+1)$-form $d \psi_q$ 
is affine. Anticipating \secn{TlSuI}, an example of an affine form with $q = 2$ and $N = 3$,
which is therefore the sum of three elements, is given by the integrand of the two-loop
sunrise diagram,
\beq
\label{Affine example}
  \psi_2 \left( \nu_i, \lambda, r \right) & = & \frac{z_1^{\nu_1 - 1} \, z_2^{\nu_2 - 1} \, 
  z_3^{\nu_3 - 1} \, (z_1 z_2 + z_2 z_3 + z_3 z_1)^\lambda}{\left( r \, z_1 z_2 z_3 - 
  (z_1 z_2 + z_2 z_3 + z_3 z_1) (z_1 + z_2 + z_3) \right))^{\frac{2 \lambda + \nu}{3}}}  \\ 
  && \hspace{3cm} \times \, \Big[ z_1 d z_2 \wedge d z_3 - z_2 d z_1 \wedge d z_3 + 
  z_3 d z_1 \wedge dz_2 \Big] \, , \nonumber
\eeq
where, as before, $\nu = \nu_1 + \nu_2 + \nu_3$. The parameter $\lambda$, which at
this stage is taken to be integer, will acquire a linear dependence on the dimensional 
regularisation parameter $\epsilon$ in the case of Feynman integrals, as discussed below.

An affine form is defined to be {\it projective} if it can be identically re-written as a linear
combination of the `integrated' forms $\eta_A$, defined in \eq{eta}. Then
\beq
\label{ProjFor}
  \psi_q \, = \, \sum_{|B| = q+1} T_B (z_i) \, \eta_B \, . 
\eeq
where the homogeneous functions $T_B (z_i)$ are obtained by suitably combining 
the functions $R_A (z_i)$ in \eq{Affine} with appropriate factors of $z_i$ arising from 
the definition of $\eta_A$ in \eq{eta}. As an example, the form $\psi_2$ in \eq{Affine example} 
is clearly projective, since the differentials reconstruct the form $\eta_A$ for $A = \{1,2,3\}$,
given in \eq{eta example}. Another example of a projective form that will appear in the 
following sections is the integrand of the one-loop massless box integral, which reads
\beq
\label{BoxExample}
  \psi_3 \left( \lambda, r \right) \, = \,
  \frac{(z_1 + z_2 + z_3 + z_4)^{\lambda}}{\left(r \, z_1 z_3 + z_2 z_4 
  \right)^{2 + \lambda/2}} \, \,
  \eta_{ \left\{ 1, 2, 3, 4 \right\}} \, ,
\eeq
where in the concrete application one will have $\lambda = 2 \epsilon$ and $r = t/s$.


\subsection{A useful theorem}
\label{Theo}

A useful result in what follows is the statement that the set of projective forms is closed
under differentiation. In other words
\begin{theorem}
The boundary of a projective form is itself projective.
\end{theorem}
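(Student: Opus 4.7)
The key observation I would start from is that the assignment $\omega_A \mapsto \eta_A$ defined in \eq{eta} coincides with contraction against the Euler vector field $E \equiv \sum_{i = 1}^{N} z_i \, \partial / \partial z_i$: a short sign-check against \eq{epsilon} shows that $\eta_A = \iota_E \omega_A$. Under this identification, any projective form $\psi_q = \sum_{|B| = q+1} T_B \, \eta_B$ can be rewritten as $\psi_q = \iota_E \Omega$, with
\begin{equation*}
  \Omega \, \equiv \, \sum_{|B| = q+1} T_B \, \omega_B \, ,
\end{equation*}
an ordinary $(q+1)$-form whose rational coefficients $T_B$ are homogeneous of degree $-(q+1)$ (as forced by the scale invariance of $\psi_q$ together with the weight $q+1$ of $\eta_B$ under dilatations).

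Next, I would invoke Cartan's magic formula $\mathcal{L}_E = d \, \iota_E + \iota_E \, d$, which immediately gives
\begin{equation*}
  d \psi_q \, = \, d \, \iota_E \, \Omega \, = \, \mathcal{L}_E \, \Omega - \iota_E \, d \Omega \, .
\end{equation*}
The first term on the right vanishes identically: \eq{d eta} yields $\mathcal{L}_E \, \omega_B = d \eta_B = (q+1) \, \omega_B$, whereas Euler's identity applied to $T_B$ gives $\sum_i z_i \, \partial_i T_B = -(q+1) \, T_B$, so the two contributions cancel inside $\mathcal{L}_E (T_B \, \omega_B)$. Hence $d \psi_q = - \iota_E \, d \Omega$.

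To conclude, I would expand $d \Omega$ explicitly: since $d \omega_B = 0$, one finds $d \Omega = \sum_{|B| = q+1} d T_B \wedge \omega_B$, which, regrouped by ambient $(q+2)$-sets, reads $d \Omega = \sum_{|C| = q+2} S_C \, \omega_C$ with $S_C = \sum_{j \in C} \epsilon_{j, C - j} \, \partial_j T_{C - j}$, a rational function homogeneous of degree $-(q+2)$. Commuting $\iota_E$ through the sum then produces
\begin{equation*}
  d \psi_q \, = \, - \sum_{|C| = q+2} S_C \, \eta_C \, ,
\end{equation*}
which is manifestly of the projective form \eq{ProjFor}. The only nontrivial step is the very first one, namely the identification $\eta_A = \iota_E \omega_A$ and the correct homogeneity bookkeeping for $T_B$; once these are in place, Cartan's formula does all the work, and one bypasses the delicate rearrangement of $\epsilon_{k, B}$ signs that a direct component-by-component expansion of $d T_B \wedge \eta_B$ would otherwise demand.
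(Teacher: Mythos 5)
Your proof is correct, and it takes a genuinely different -- and slicker -- route than the paper's. The paper introduces the operator $p$ abstractly, proves its nilpotency $p^2 = 0$ and the anticommutation relation $d \circ p + p \circ d = 0$ of \eq{dppd} by explicit manipulation of the signature factors $\epsilon_{k,B}$, and then closes the argument by invoking the $p$-exactness of projective forms, a fact cited to Ref.~\cite{Regge:1968rhi} and only illustrated by example. Your identification $\eta_A = \iota_E \, \omega_A$, hence $p = \iota_E$, turns all three ingredients into standard differential geometry: nilpotency is $\iota_E^2 = 0$; the relation \eq{dppd} is precisely Cartan's formula $d \, \iota_E + \iota_E \, d = \mathcal{L}_E$ combined with the observation that affine forms are exactly the dilatation-invariant ones, $\mathcal{L}_E \psi_q = 0$ (degree $-q$ coefficients against weight-$q$ volume elements, which is your cancellation via Euler's identity and \eq{d eta}); and the $p$-exactness of $\psi_q = \sum_B T_B \, \eta_B$ is immediate from the definition \eq{ProjFor}, since $\psi_q = \iota_E \Omega$ with $\Omega = \sum_B T_B \, \omega_B$, so no appeal to the literature is needed. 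Your endgame is also more constructive: rather than merely concluding that $d\psi_q$ is $p$-exact, you exhibit its projective decomposition explicitly, $d\psi_q = - \sum_{|C| = q+2} S_C \, \eta_C$ with $S_C = \sum_{j \in C} \epsilon_{j, C-j} \, \partial_j T_{C-j}$, which is essentially the integrand-level IBP identity \eq{dw} in embryonic form. What the paper's component-level derivation buys in exchange is self-containedness (no differential-geometric machinery assumed) and explicit practice with the $\epsilon_{k,B}$ bookkeeping reused throughout the paper; what yours buys is brevity, a conceptual explanation of \emph{why} the sign cancellations in the paper's \eq{dppd2} are guaranteed, and closed-form coefficients for the image form. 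One small bookkeeping remark: the homogeneity of the $T_B$ is built into the paper's definition of a projective form, so your statement that degree $-(q+1)$ is ``forced'' is a consistency check rather than an extra hypothesis; note also that your argument extends unchanged to the paper's generalised coefficients with non-integer exponents, since Euler's identity holds for any degree of homogeneity.
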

\begin{proof}
Consider an operator $p$ trasforming an affine $q$-form into a projective (and therefore 
also affine) $(q - 1)$-form, according to 
\beq
\label{Defp}
  p \, : \,  \sum_{|A| = q} R_A (z_i) \, \omega_A \quad \rightarrow \quad
  \sum_{|A| = q} R_A (z_i) \, \eta_A \, .
\eeq
First, we note that the operator $p$ is nilpotent, \textit{i.e.} $p^2 = 0$. This can be 
easily shown for a single term in \eq{Defp}, $R_A (z_i) \, \omega_A$, and the generalization 
is then straightforward. In fact
\beq
\label{p^2}
  p^2 \Big( R_A (z_i) \, \omega_A \Big) & = & p \left( R_A (z_i) \sum_{i \in A} z_i \, 
  \epsilon_{i, A - i} \, \omega_{A - i} \right) \nonumber \\
  & = & R_A (z_i) \sum_{i > j, \left\{ i,j \right\} \in A} \! z_i z_j 
  \left( \epsilon_{i, A - i} \, \epsilon_{j, A - i - j} + 
  \epsilon_{j, A - j} \, \epsilon_{i, A - j - i} \right) \, = \, 0 \, .
\eeq
An example can serve the purpose of illustrating the cancellation in the last step:
\beq
\label{p^2ex}
  p^2 \Big( R_A (z_i) \, d z_1 \wedge d z_2 \wedge d z_3  \Big) & = & 
  R_A (z_i)  \, p \Big( z_1 \, d z_2 \wedge d z_3 - z_2 \, d z_1 \wedge d z_3 
  + z_3 \, d z_1 \wedge d z_2 \Big) \quad \\
  && \hspace{-3cm} = \, R_A (z_i) \, \big( z_1 z_2 d z_3 - z_1 z_3 d z_2 - z_2 z_1 d z_3 + 
  z_2 z_3 d z_1 + z_3 z_1 d z_2 - z_3 z_2 d z_1 \big) \, = \, 0 \, .
\nonumber
\eeq
The cancellation clearly works for any subset $A \subset D$, since there are always 
two terms in the sum that are proportional to $z_i z_j$, and they contribute with opposite 
sign. Considering for example $i < j$, when the factor of $z_j$ is generated by the first
action of $p$, the sign of the term is given by the position of the indices $i$ and $j$
in the ordered list of the elements of $A$. On the other hand, when the factor $z_i$ 
is generated by the first action of $p$, what is relevant is the position of $j$ in $A-i$.

The nilpotent operator $p$ can be combined with exterior differentiation to map affine
$q$-forms into affine $q$-forms. One can then show that
\beq
\label{dppd}
  d \circ p + p \circ d \, = \, 0 \, ,
\eeq
when acting on any affine $q$-form $\psi_q$. In order to prove \eq{dppd}, we note that
\beq
\label{pd}
  \big( p \circ d \big) \psi_q \, = \, \sum_{i \notin A} \, \frac{\partial R_A}{\partial z_i} \,
  (-1)^{|A_i|} \, \eta_{A \cup i} \, = \, \sum_{i \notin A} \sum_{j \in A \cup i} 
  \frac{\partial R_A}{\partial z_i} \, z_j \, (-1)^{|A_i|} \, (-1)^{|(A\cup i)_j|} \, 
  \omega_{A \cup i - j} \, , \quad
\eeq
while
\beq
\label{dp}
  \big( d \circ p \big) \psi_q \, = \, \sum_{j \in A} \,\, \sum_{i \notin A \, \vee \, i = j } 
  \left( \frac{\partial R_A}{\partial z_i} \, z_j + R_A (z_\ell) \, \delta_{i,j} \right) 
  (-1)^{|A_j|} \, d z_i \wedge \omega_{A - j} \, .
\eeq
By manipulating the indices and combining terms, the sum of \eq{pd} and \eq{dp} 
becomes
\beq
\label{dppd2}
  \Big( d \circ p + p \circ d \Big) \, \psi_q & = & \sum_{j \in A, \, i \notin A} 
  \frac{\partial R_A}{\partial z_i} \, z_j \left( (-1)^{|A_i|} \, (-1)^{| (A \cup i)_j |} + 
  (-1)^{| A_j |} \, (-1)^{| (A - j)_i |} \right) \omega_{A \cup i - j} \nonumber \\ 
  && \hspace{2cm} + \sum_{j \in A, \, i = j} R_A \, \delta_{i,j} \, \omega_A \,
  + \, \sum_{i \in D} \frac{\partial R_A}{\partial z_i} \, z_i \, \omega_A \, = \, 0 \, ,
\eeq
as desired. As an example of this last step, consider 
\beq
\label{exampleth}
  \psi_2 \, = \, \frac{z_1 + z_3}{(z_1 + z_2)^3} \, d z_1 \wedge d z_2 \, ,
\eeq
which implies
\beq
  d \psi_2 \, = \, \frac{1}{(z_1+z_2)^3} \, d z_1 \wedge d z_2 \wedge d z_3 
  \quad \longrightarrow \quad \big(p \circ d \big) \psi_2 \, = \, \frac{1}{(z_1+z_2)^3} 
  \, \eta_{\left\{ 1,2,3 \right\}} \, .
\eeq
On the other hand, one easily verifies that
\beq
  \big(d \circ p \big) \psi_2 & = &
  d \left( \frac{z_1 (z_1 + z_3)}{(z_1 + z_2)^3} \, d z_2 - 
  \frac{z_2 (z_1 + z_3)}{(z_1 + z_2)^3} \, d z_1 \right) \nn \\
  & = & - \frac{z_3}{(z_1 + z_2)^3} \, d z_1 \wedge d z_2 - \frac{z_1}{(z_1 + z_2)^3} \, 
  d z_2 \wedge d z_3 + \frac{z_2}{(z_1 + z_2)^3	} \,  d z_1 \wedge d z_3 \nn \\ 
  & = & - \frac{1}{(z_1 + z_2)^3} \, \eta_{\left\{ 1,2,3 \right\}} \, ,
\eeq
as desired. \eq{dppd}, just established, is actually sufficient to conclude the proof 
of the theorem. In fact, it can be shown~\cite{Regge:1968rhi} that all projective 
$q$-forms can be constructed by acting with the operator $p$ on $(q+1)$-forms:
in other words, they are $p$-exact, and any $\psi_q$ can be written as $\psi_q 
\, = \, p \big( \xi_{q+1} \big)$. An example can clarify this statement. Consider 
a generic affine two-form
\beq
\label{example for co-homology}
  \psi_2 \, = \, R_{12} \, dz_1 \wedge dz_2 + R_{13} \,dz_1 \wedge dz_3 + 
  R_{23} \, dz_2 \wedge dz_3 \, ,
\eeq
where $R_{ij}$ are homogeneous rational functions of degree $-2$ in the 
variables $z_1$, $z_2$ and $z_3$. By imposing that $p(\psi_2) = 0$ it is 
immediate to obtain 
\beq
\label{example for co-homology 2}
  \psi_2 \, = \, \frac{R_{12}}{z_3} z_3 dz_1\wedge dz_2 - \frac{R_{12}}{z_3} z_2 dz_1 
  \wedge dz_3 + \frac{R_{12}}{z_3} z_1 dz_2 \wedge dz_3 \, ,
\eeq
which is a projective form. The generalization to affine $n$-forms is straightforward, 
as the condition generates a system of linear equations that is enough to fix all the 
rational functions appearing in the form, but one (the overall coefficient function 
multiplying the projective volume form). Using this information on the {\it l.h.s} of 
\eq{dppd2}, one sees that the first term vanishes by the nilpotency of $p$; the second
term must then also vanish, which implies that $d \psi_q$ is itself $p$-exact 
and thus projective, as desired.
\end{proof}
\noindent In the context of Feynman parametric integration, the theorem is significant
for the following reason: given that Feynman integrals in the parameter representation
are integrals of projective forms on a simplex (as discussed below), applying the 
boundary operator $d$ on the integrand generates relations among forms with the 
same properties, \textit{i.e.} other Feynman integrands, or generalisations thereof.
These relations take the form of linear difference equations, which in turn can be 
used to build closed systems of differential equations to ultimately compute the
integrals, just as normally done in the momentum-space representation.


\section{Feynman integrals as projective forms}
\label{FeyPr}

This section presents the core results of our paper. We identify the integrands of 
Feynman integrals as projective forms of a specific kind, we examine their properties, 
and finally we use the fact that the differential of a projective form is still a projective 
form to write a set of generic relations among parametric integrands that include 
and generalise those appearing in Feynman integrals. These relations take the 
form of integration-by-parts (IBP) identities relating different (generalised) Feynman 
integrals, and can be used to build and simplify systems of differential equations 
in parameter space. 

To begin with, consider the projective form 
\beq
\label{ProjectiveForm}
  \alpha_{n-1} \, = \, \eta_{n-1} \,\, \frac{Q \big( \left\{ z_i \right\} 
  \big)}{D^P \big( \left\{ z_i \right\} \big)} \, , 
\eeq
where $\eta_{n-1}$ is the complete projective volume form of the projective space
$\mathbb{PC}^{n-1}$, while $Q(\left\{ z_i \right\})$ is a 
polynomial of degree $(l + 1) P - n$ and $D(\left\{ z_i \right\})$ a 
polynomial of degree $(l + 1)$. We recognise that the integrand of \eq{FP} is 
a specific instance of such a form, with the polynomial $D$ given by the second 
Symanzik polynomial of the graph, ${\cal F}$, and with $P = \nu - l d/2$. A first 
important property of projective forms such as \eq{ProjectiveForm}, and thus 
in particular of Feynman integrands, is the following
\begin{theorem}
Given two integration domains, $O, O' \in \mathbb{C}^n$,
if their image in $\mathbb{PC}^{n - 1}$ is the same simplex, 
then $\int_O \alpha_{n - 1} \, = \, \int_{O'} \alpha_{n-1}$.
\end{theorem}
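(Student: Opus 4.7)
The plan is to exploit the projective (degree-zero homogeneous) nature of $\alpha_{n-1}$: together with its horizontality, this guarantees that the integrand is effectively a differential form pulled back from the projective space $\mathbb{PC}^{n-1}$, so both integrals coincide with the common projective integral over the shared image simplex.

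First, I would check the homogeneity directly. Under $z_i \to \lambda z_i$, each of the $n$ terms in $\eta_{n-1}$ carries one explicit $z_i$ together with $n-1$ differentials $dz_j$, so $\eta_{n-1}$ scales as $\lambda^n$. The numerator $Q$ has degree $(l+1) P - n$ and the denominator $D^P$ has degree $(l+1) P$, hence the rational prefactor scales as $\lambda^{-n}$, and $\alpha_{n-1}$ is invariant.

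Second, I would note that $\eta_{n-1}$ is nothing but the interior product $i_E \, ( dz_1 \wedge \cdots \wedge dz_n )$, where $E = \sum_{i = 1}^n z_i \, \partial / \partial z_i$ is the Euler vector field. This is precisely how the operator $p$ of the preceding theorem acts on the Euclidean top form, so the nilpotency $p^2 = 0$ already proved there translates into the horizontality statement $i_E \, \alpha_{n-1} = 0$. Combined with the degree-zero homogeneity, which is equivalent to $\mathcal{L}_E \, \alpha_{n-1} = 0$ by Cartan's formula, this means that $\alpha_{n-1}$ is basic with respect to the $\mathbb{R}_{>0}$ radial action, so it descends to a uniquely determined $(n-1)$-form $\tilde\alpha$ on the projective space $\mathbb{PC}^{n-1}$.

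Finally, the hypothesis that $O$ and $O'$ project to the same simplex $\Sigma \subset \mathbb{PC}^{n-1}$ means that each is a transverse section of the ray bundle above $\Sigma$; the change of variables formula then yields
\be
  \int_O \alpha_{n-1} \, = \, \int_\Sigma \tilde\alpha \, = \, \int_{O'} \alpha_{n-1} \, ,
\ee
which is the claim. The only subtlety worth flagging is that, if one prefers an explicit argument that bypasses the language of basic forms, one can parametrise the fibrewise diffeomorphism $\phi : O \to O'$ as $\phi(z) = \lambda(z) \, z$ with $\lambda(z) > 0$ chosen so the image lies on $O'$, and then verify $\phi^* \alpha_{n-1} = \alpha_{n-1}$ by direct computation; the spurious $d\lambda$ terms generated by the pullback of $\eta_{n-1}$ must be shown to cancel by virtue of the horizontality identity $i_E \, \eta_{n-1} = 0$. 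This bookkeeping is the main routine obstacle in the alternative approach, and is what the descent argument above elegantly avoids.
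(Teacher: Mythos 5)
Your proof is correct, but it follows a genuinely different route from the paper's. The paper argues at the level of chains: it interpolates between $O$ and $O'$ with a lateral surface $\Delta$ ruled by the scaling orbits connecting boundary points with common projective image, and then applies Stokes' theorem, using (i) closedness of $\alpha_{n-1}$ to kill the bulk contribution and (ii) the vanishing of $\eta_{n-1}$ on $\Delta$ to kill the lateral one. You instead avoid Stokes entirely: you verify the two conditions $i_E \, \alpha_{n-1} = 0$ (horizontality, which as you note is the same $i_E \circ i_E = 0$ combinatorics that underlies $p^2 = 0$) and $\mathcal{L}_E \, \alpha_{n-1} = 0$ (degree-zero homogeneity), conclude that $\alpha_{n-1}$ is basic for the scaling action, and identify both integrals with the integral of the descended form over the common simplex. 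Your two ingredients are precisely the infinitesimal shadow of the paper's two ingredients — in fact your argument makes explicit the mechanism that the paper invokes somewhat telegraphically, since what its proof actually needs on $\Delta$ is the vanishing of $\eta_{n-1}$ on radially ruled surfaces, i.e.\ your identity $i_E \, \eta_{n-1} = 0$, and on the cone over the simplex closedness then comes for free (a basic $n$-form over an $(n-1)$-dimensional base vanishes). What each approach buys: the descent argument explains conceptually \emph{why} the statement holds (the integrand genuinely lives downstairs) and dispenses with constructing the interpolating region; the paper's Stokes argument is more flexible when $O$ and $O'$ are not honest single-valued sections over the simplex, needing only the chain-level relation $\partial(\mathrm{region}) = O' - O - \Delta$, and it connects more directly to the homology/monodromy viewpoint the paper inherits from Regge. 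Two shared caveats, which you gloss over exactly as the paper does: both proofs implicitly assume the projection restricts to orientation-compatible bijections $O, O' \to \Sigma$ and that the cone swept between them avoids the zero locus of $D$; and when $P$ is non-integer (the dimensionally regularised case) the homogeneity $\lambda^{-n}$ of the coefficient is only single-valued for $\lambda \in \mathbb{R}_{>0}$, so your restriction to the positive radial action is in fact the safe formulation, whereas fibres differing by general complex scalings would require tracking branch cuts. Your closing remark on the direct pullback computation is also accurate: the spurious $d\lambda$ terms in $\phi^* \eta_{n-1}$ cancel pairwise by the same signature identity used in \eq{p^2}, giving $\phi^* \alpha_{n-1} = \alpha_{n-1}$ outright.
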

\begin{proof}
This follows immediately from the fact that
\begin{itemize}
\item[i)] $\alpha_{n-1}$ is a closed form;
\item[ii)] $\eta_{n-1}$ is null on each surface defined by $z_i = 0$
\end{itemize}
Indeed, if we denote by $\Delta$ the subset of $\mathbb{C}^n$ given by the surface 
connecting points in the boundaries of $O$ and $O'$ that have a common image in 
the projective space, then $\int_{O + \Delta - O'} \alpha^{n - 1} = 0$ because of statement 
i), while $\int_{\Delta} \alpha^{n-1} = 0$ because of statement ii).
\end{proof}

\begin{figure}[!h]
\begin{center}
\includegraphics[scale=0.4]{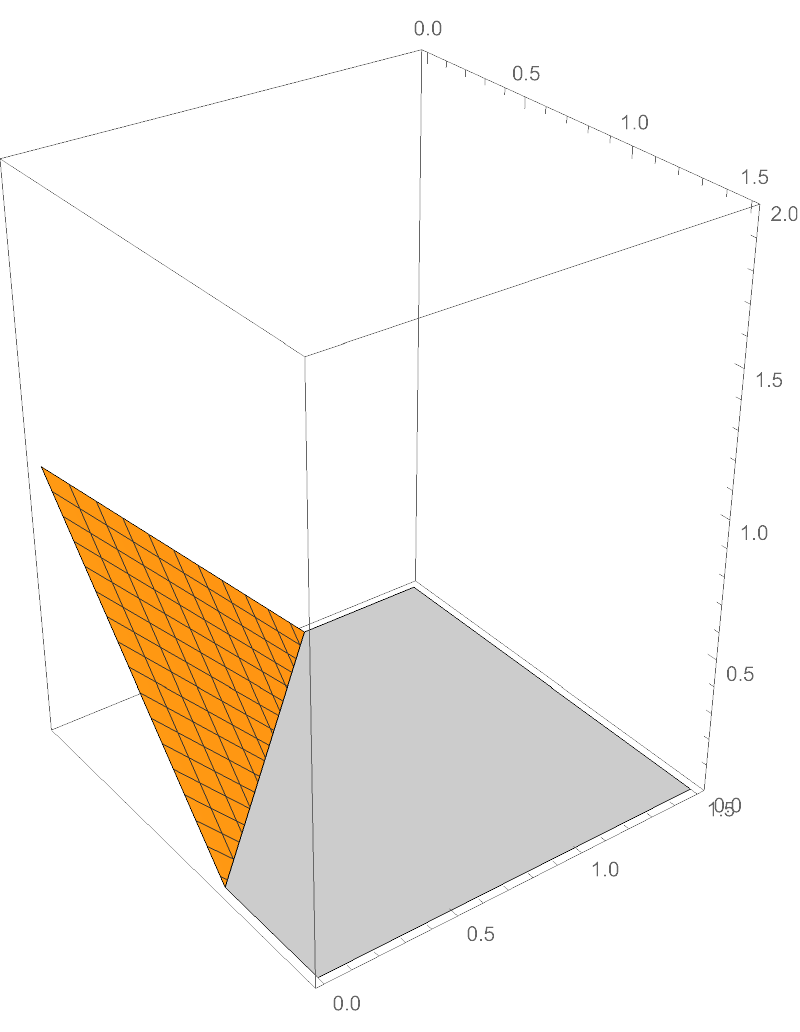}
\includegraphics[scale=0.4]{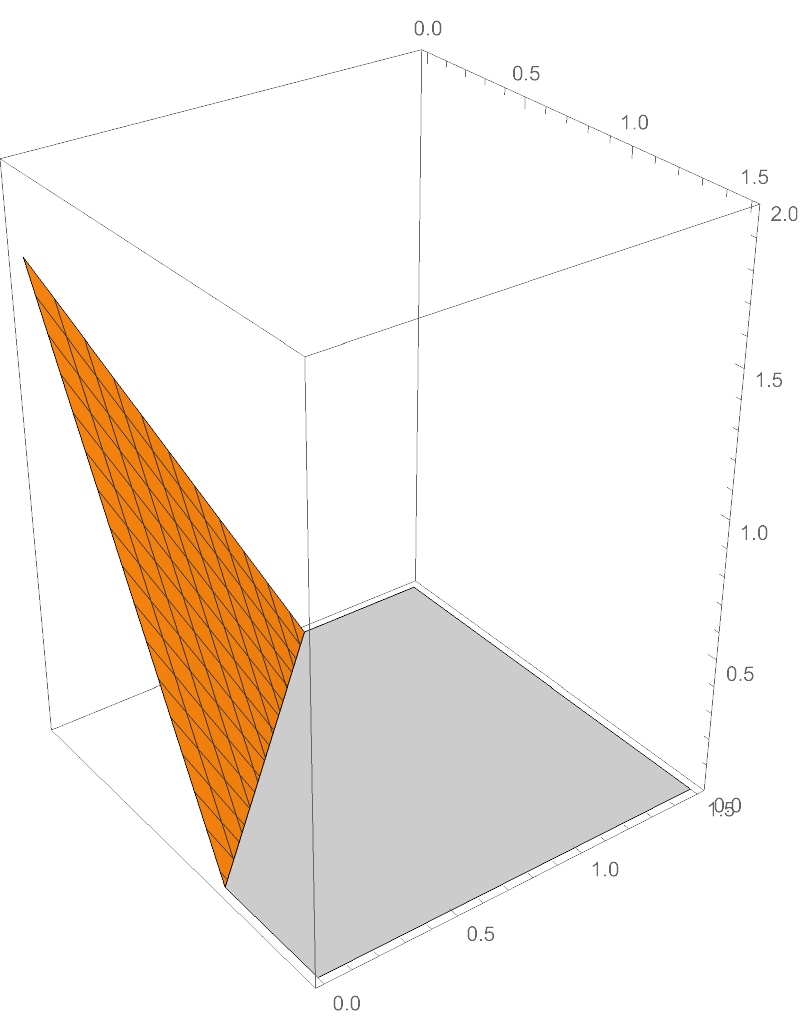}
\end{center}
\caption{Example of two domains in $\mathbb{R}^3$ that are equivalent under 
projective transformations.}
\end{figure}

We note that this theorem provides, in particular, a proof of the so-called Cheng-Wu 
theorem~\cite{Cheng:1987ga}. The latter, in its original form, states that in the argument 
of the delta function in the Feynman-parametrised expression, \eq{FP} it is possible 
to restrict the sum to an arbitrary subset of Feynman parameters $z_i$. In fact, consider 
the integration of \eq{ProjectiveForm} on the projection of the $n$-dimensional simplex, 
$S_{n-1} \equiv \left\{ z_i  \, | \, \sum_{i=1}^n z_i = 1 \right\}$. This choice of
integration domain is arbitrary (within the set of projectively equivalent domains), as 
was proven above. This means that, for example, the set defined by $\{z_i \, | \, 
\sum_{i = 1}^n z_i = 1\}$ and the set $\{z_i \, | \, z_n + t  \sum_{i = 1}^{n-1} z_i = t\}$ 
are equivalent for any positive value of $t$. In the limit $t \to \infty$ the integration 
domain becomes independent of $z_n$, and becomes a semi-infinite $(n-1)$-dimensional
surface based on the simplex $\{z_i \, | \, \sum_{i = 1}^{n-1} z_i = 1\}$. Figure 2 provides 
an example in three dimensions of the two mentioned surfaces.
Given these preliminary considerations, we can proceed using the conventional choice 
of $S_{n-1}$ as integration domain. In that case
\beq
\label{sumdz}
  d z_n \, = \, - \sum_{i = 1}^{n-1} d z_i \, ,
\eeq
so that
\beq
\label{projective F}
  \int_{S_{n - 1}} \, \eta_{n - 1} \, \frac{Q(z)}{D^P(z)} \, = \, \int_{z_i \geq 0}
  d z_1 \ldots d z_n \, \delta \! \left(1 - \sum_{i=1}^n z_i \right) \, \frac{Q(z)}{D^P(z)} \, ,
\eeq
where we use the shorthand notation $z$ for the set $\{z_i\}$. Any consistent choice
of the polynomials $Q(z)$ and $D(z)$, yielding a projective form, provides a natural
generalisation of \eq{FP}. We can now use the fact that the boundary of a projective
$(n-2)$-form is a projective $(n-1)$-form, to construct integration-by-parts identities
in Feynman parameter space in full generality. To this end, consider the projective 
$(n-2)$-form 
\beq
\label{w}
  \omega_{n-2} \, \equiv \, \sum_{i  = 1}^{n} (-1)^i \, \eta_{\{z\} - z_i} \,
  \frac{H_i (z)}{(P - 1) \, \big( D(z) \big)^{P - 1}} \, ,
\eeq
with any suitable choice of the polynomials $H_i (z)$ and $D(z)$. Technically, we need 
to restrict the choice of $D(z)$ so that the singularities of $\omega_{n-2}$ lie in a general 
position with respect to the simplex integration domain $S_{n-1}$, and in particular they 
do not touch the sub-simplexes forming the boundaries of $S_{n-1}$. This case was 
labeled as case (A) in~\cite{Regge:1968rhi}. When the singularities reach the integration 
domain, it is necessary to perform a blow-up of the singular points and treat the singular 
regions separately. Note that dimensionally regularized UV and IR divergent integrals 
can be treated without difficulties: these divergences are regulated by the parameter 
$\epsilon$, which features in the exponents of the parameters, while the restriction on 
$D(z)$ is related to external kinematics, which may force the denominator to vanish on
boundary simplexes.

Acting now with the boundary operator $d$ on the form $\omega_{n - 2}$ gives
\beq
\label{dw}
  d \omega_{n-2} \, = \, \frac{1}{(P-1) \, \big( D(z) \big)^{P - 1}} \, \, \eta_{\{z\}} \, 
  \sum_{i = 1}^n \frac{\partial H_i(z)}{\partial z_i} \, - \, \frac{\eta_{\{z\}}}{\big( D(z) \big)^P} \, 
  \sum_{i = 1}^n  H_i \, \frac{\partial D(z)}{\partial z_i} \, .
\eeq
This is the sought-for integration-by-parts identity: the integration of any projective form 
of the kind introduced in \eq{ProjectiveForm}, with the choice
\beq
\label{formQ}
  Q \big( z \big) \, = \, \sum_{i = 1}^n  H_i \, 
  \frac{\partial D(z)}{\partial z_i} \, ,
\eeq
can be reduced to the integration of forms with smaller values of $P$, modulo a possible
boundary term, which can be integrated via the Stokes theorem on sub-simplexes (this is 
the reason for the requirement that the singular surface of $\alpha_{n-1}$ should not 
intersect the boundary). It is important to stress that the possible presence of non-vanishing 
boundary terms represents a substantial difference with respect to the conventional 
IBP identities in momentum space: in Feynman parametrisation, boundary terms do 
not in general integrate to zero. In terms of Feynman diagrams, the integration over 
sub-simplexes represents the shrinking of a line of the diagram to a point. \eq{dw} will 
be the basis for all the applications in the following sections. We note that, when applied 
to Feynman integrals, \eq{dw} is valid for any number of loops and external legs, since 
the structure of the integrands in parameter space can always be written as was done 
in \eq{ProjectiveForm}. In order to explore its applications, we begin by specialising to 
one-loop graphs.


\subsection{One-loop parameter-based IBP}
\label{OlIBP}

Consider \eq{FP} for a one-loop diagram with $n$ internal propagators. In this 
case one can write
\beq
\label{1LFP}
  I_G (\nu_i, d) \, = \, \frac{\Gamma(\nu - d/2)}{\prod_{j = 1}^n \Gamma(\nu_j)}
  \int_{z_j \geq 0} \! d^n z \, \delta \bigg(1 - z_{n+1} \bigg) \, 
  \frac{\prod_{j = 1}^{n+1} z_j^{\nu_j -1}}{\bigg[
  \sum_{i=1}^{n+1} \sum_{j=1}^{i-1} s_{ij} z_i z_j \bigg]^{\nu - d/2}} \, , \quad
\eeq
where we introduced the matrix $s_{ij}$ $(i,j = 1, \ldots, n+1)$, defined by 
\beq
\label{sij}
  s_{ij} \, = \, \frac{(q_j - q_i)^2}{\mu^2} \quad (i,j = 1, \ldots, n) \, , \qquad \quad
  s_{i, n+1} \, = \, s_{n+1, i} \, \equiv \, - \frac{m_i^2}{\mu^2} \, ,
\eeq
as well as the auxiliary quantities
\beq
  z_{n+1} \, \equiv \, \sum_{i =1}^n z_i \, , \qquad \quad \nu_{n+1} \, \equiv \, \nu - d + 1 \, .
\eeq
The $s_{ij}$ represent then the invariant squared masses of the combinations of 
external momenta flowing in or out the diagram between line $i$ and line $j$.
We can now consider \eq{w}, and choose for $H_i$ simply the numerator of 
\eq{1LFP}. Furthermore, we can consider separately the $n$ forms obtained by
omitting from the projective volume the variable $z_i$, ($i = 1, \ldots, n$), in turn. 
This amounts to setting
\beq
\label{chooseHi}
  H_i \, = \, \delta_{ih} \left( \prod_{j = 1}^n z_j^{\nu_j -1} \right) 
  \left(\sum_{k =1}^n z_k \right)^{\nu - d} 
  \, = \, \, \delta_{ih} \, \prod_{j = 1}^{n+1} z_j^{\nu_j -1} \, ,
 \eeq
for some $h \in \left\{ 1,...,n\right\}$. Supposing $\nu_h > 1$, this leads to 
\beq
\label{1LIBP} 
  && d \left( (-1)^h  \, \eta_{\{z\} - z_h} \frac{ \prod_{j = 1}^{n+1} z_j^{\nu_j -1}}{\big( 
  \nu - (d + 1)/2 \big) \left( \sum_{i = 1}^{n+1} 
  \sum_{j=1}^{i-1} s_{ij} z_i z_j \right)^{\nu - (d+1)/2 }} \right) \, = 
  \nonumber \\
  && \hspace{5mm} = \, \frac{\eta_{\{z\}}}{ \big( \nu - (d+1)/2 \big) \left( \sum_{i = 1}^{n+1} 
  \sum_{j=1}^{i-1} s_{ij} z_i z_j  \right)^{\nu - (d+1)/2 }} \,
  \left[ (\nu_h -1) \frac{H_i}{z_h} + (\nu - d) \, \frac{H_i}{z_{n+1}} \right] 
  \nonumber \\
  && \hspace{5mm} - \, \frac{ \eta_{\{z\}} }{\left( \sum_{i = 1}^{n+1} 
  \sum_{j=1}^{i-1} s_{ij} z_i z_j \right)^{\nu - (d-1)/2 }} \, H_i \, \left( \sum_{k = 1}^{n+1} 
  \big( s_{k h} + s_{k, n+1} \big) z_k \right) \,.
\eeq
where $s_{n+1,n+1} = 0$. In order to clarify the structure of this expression, we introduce 
an index notation, following Ref.~\cite{Barucchi:1973zm}. We first define
\beq
\label{notforf1}
  f \big( \left\{ \nu_1, \ldots, \nu_{n+1} \right\} \big) \, \equiv \, 
  f \big( \left\{ {\cal R} \right\} \big) \, = \, \eta_{\{z\}} \, \frac{
  \prod_{j = 1}^{n+1} z_j^{\nu_j -1}}{\left( 
  \sum_{i = 1}^{n+1} \sum_{j=1}^{i-1} s_{ij} z_i z_j \right)^{\nu - d/2 }} \, .
\eeq
Then we write
\beq
\label{notforf2}
  f \big( \left\{ {\cal I} \right\}_{-1}, \left\{ {\cal J} \right\}_{0}, \left\{ {\cal K} \right\}_{1})
\eeq
to denote the same function as in \eq{notforf1}, where however the indices $\nu_i \in 
\left\{ {\cal I}, {\cal J}, {\cal K} \right\}$ have been respectively decreased 
by one, left untouched, and increased by one. Note that we consider sets such 
that $ \left\{ {\cal I} \right\} \cup \left\{ {\cal J} \right\} \cup \left\{ {\cal K} \right\} = 
\mathcal{R}$. Furthermore, the raising and lowering operations are defined in 
order to preserve the character of $f$ as a projective form, so the exponent of 
the denominator is re-determined after raising and lowering the indices.
With this notation, \eq{1LIBP} can be written as 
\beq
\label{1LIBP2}
  d \omega_{n-2}  + \sum_{k = 1}^{n+1} (s_{k h} + s_{k, n+1}) \,
  f \big( \left\{ \mathcal{R} - k \right\}_0,\left\{ k \right\}_{1} \big) & = & 
  \frac{\nu_h - 1}{\nu - (d+1)/2} \, f \big( \left\{ h \right\}_{-1}, 
  \left\{ \mathcal{R} - h \right\}_0 \big) \\
  & + &  \frac{\nu - d}{\nu - (d+1)/2} \, f \big( \left\{ n+1 \right\}_{-1}, 
  \left\{ \mathcal{R} - \left\{ n+1 \right\} \right\}_0 \big) \, . \nonumber
\eeq
This is the desired `integration by parts identity' at one loop, which at this stage 
is kept at integrand level to emphasise the fact that the boundary integral is not 
a priori vanishing. 
Notice that at the one-loop level one also has the constraint
\beq
\label{sum of f}
  \sum_{i = 1}^n f \big( \left\{ \mathcal{R} - i \right\}_0, \left\{ i \right\}_{1} \big) \, = \,  
  f \big( \left\{ \mathcal{R} - \left\{ n+1 \right\} \right\}_0, \left\{ n+1 \right\}_{1} \big) \, ,
\eeq
immediately following from the definition of $f$ and the one-loop Symanzik 
polynomial.

Ref.~\cite{Barucchi:1973zm} shows that, when the boundary term is zero, 
\eq{1LIBP2} and \eq{sum of f} allow for the systematic construction of a closed 
system of first-order differential equation. The proof proceeds by considering a 
set of parametric integrals containing all integrals obtained by raising an even 
number of parameter exponents by one unit (including the one of the first Symanzik 
polynomial). The derivatives of these integrals with respect to $s_{ij}$ are then included 
in a linear system of equations, as in \eq{1LIBP} and \eq{sum of f}, which is then 
solved in terms of the original set of integrals. This procedure is constructive and 
algorithmic, but one notices empirically that the number of integrals in the system 
is often higher than the number of actually independent master integrals, in concrete 
cases with specific mass assignments. In \secn{ExpEx}, we will use a similar 
construction, tryings to minimize the over-completeness of the resulting bases. 


\section{One-loop examples}
\label{ExpEx}


\subsection{One-loop massless box}
\label{OlMaB}

Let us consider the integral in \eq{1LFP} for the one-loop massless box integral, 
where $n = 4$, and for simplicity we set the renormalisation scale as $\mu^2 = 
(p_1+p_4)^2 \equiv s$, while $(p_1+p_2)^2 \equiv t$ (all momenta incoming). 
In particular, we focus on the simple case where all $\nu_i = 1$, and we define
\beq
\label{massless box}
  I_{\textrm{box}} \, = \, \Gamma (2 + \epsilon) \, \int_{S_{n-1}} \eta_{\{ z \}} \,
  \frac{ (z_1 + z_2 + z_3 + z_4)^{2 \epsilon}}{\left( r  z_1 z_3 + z_2 z_4 
  \right)^{2 + \epsilon}} \, \equiv \, \Gamma( 2 + \epsilon) \, I(1,1,1,1; 2 \epsilon) \, ,
\eeq
where, as in \eq{BoxExample}, we defined $r = t/s$, and the notation for four-point 
integrals is from now on $I(\nu_1, \nu_2, \nu_3, \nu_4; \nu_5)$. This 
notation is set up so that the arguments of the function correspond directly to 
the exponents of the propagators in the corresponding Feynman diagrams. 
Notice that in this framework, as is well known, the dimension of spacetime 
becomes simply a parameter related to the exponents of the first Symanzik 
polynomial, and dimensional shift identities are naturally encoded in the parameter 
based IBP equation, \eq{1LIBP2}. The matrix $s_{ij}$ for $I_{\rm box}$ reads 
\begin{equation}
s_{ij} = \left( \begin{array}
{c c c c c}

0 & 0 & r & 0 & 0 \\ 
 
0 & 0 & 0 & 1 & 0 \\ 

r & 0 & 0 & 0 & 0 \\ 

0 & 1 & 0 & 0 & 0 \\ 

0 & 0 & 0 & 0 & 0 \\ 

\end{array} \right)
\end{equation}
The construction in Ref.~\cite{Barucchi:1973zm} shows that the integrals that  
appear in the final differential equations system are $I(1,1,1,1;2\epsilon)$ and 
the ones obtained from it by raising by one the powers $\nu_i$ for an even 
number of parameters. Based on this result, we expect a basis set of integrals 
to be given by
\beq
\label{basisbox}
  \Big\{ I(1, 1, 1, 1; 2 \epsilon), I(2, 1, 2, 1; 2 \epsilon), I(1, 2, 1, 2; 2 \epsilon),
  I(2, 2, 2, 2; 2 \epsilon) \Big\} \,.
\eeq
In this simple case, we know that this basis is overcomplete: only three linearly 
independent master integrals are needed for the calculation of the one-loop 
massless box~\cite{Henn:2014qga}. Since however our goal here is just
to establish the viability of the method, we proceed with the ansatz in \eq{basisbox}.

To verify that this is indeed a basis and that we can close the system, consider first
the derivative of $I(1,1,1,1; 2 \epsilon)$ with respect to $r$,
\beq
\label{dzI1}
  \partial_r I (1, 1, 1, 1; 2 \epsilon) \, = \, - (2 + \epsilon) \, I(2, 1, 2, 1; 2 \epsilon) \, ,
\eeq
which indeed contains only integrals belonging to the desired set. On the other hand
\beq
\label{dzI2}
  \partial_r  I (2, 1, 2, 1; 2 \epsilon) \, = \, - ( 3 + \epsilon) \, I (3, 1, 3, 1; 2 \epsilon) \, .
\eeq
In order to proceed, it is necessary to express the integral $I(3, 1, 3, 1; 2 \epsilon)$ 
in terms of integrals belonging to the chosen set. \eq{1LIBP2} for $\nu_1 = 3$, 
$\nu_{3} = 2$, $\nu_2 = \nu_4 = 1$ and $h = 1$ becomes
\beq
\label{2010}
  r I(3, 1, 3, 1; 2 \epsilon) + \int d \omega_{n-2} \, = \, 
  \frac{2}{3 + \epsilon} I(2, 1, 2, 1; 2 \epsilon) + \frac{2 \epsilon}{3 + \epsilon} 
  I(3, 1, 2, 1; - 1 + 2 \epsilon) \, .                                                                                                                                                                                                                                                                                                                                                                                
\eeq
The boundary term is 
\beq
\label{vanbou}
  \frac{z_1^2 z_3 
  \left(z_1 + z_2 + z_3 + z_4 \right)^{2 \epsilon}}{(3 + \epsilon)
  (r z_1 z_3 + z_2 z_4)^{3 + \epsilon}}
  \big( z_2 d z_3 \wedge dz_4 - z_3 d z_2 \wedge d z_4 + z_4 
  d z_2 \wedge d z_3 \big) \Bigg|_{\partial S_{n-1}}
  \, = \, 0 \, ,
\eeq
which follows from the fact that the projective form $\eta^{234}$ vanishes on all boundary 
sub-simplexes, except the one defined by $z_1 = 0$, where however the integrand is zero. 
This property holds for all the identities used in this section, and the reasoning will
not be repeated. Consider now \eq{1LIBP2} for $\nu_1 = \nu_2 = \nu_3 = 2$, $\nu_4 = 1$ 
and $h = 2$, as well as the sum rule in \eq{sum of f} for $I(2, 1, 2, 1; - 1 + 2 \epsilon)$. 
One finds
\beq
\label{1110}
  I(2, 2, 2, 2; 2 \epsilon) & = & \frac{1}{3 + \epsilon} I(2, 1, 2, 1; 2 \epsilon) + 
  \frac{2 \epsilon}{3 + \epsilon} I(2, 2, 2, 1; - 1 + 2 \epsilon) \, , \nonumber \\
  I(2, 1, 2, 1; 2 \epsilon) & = & 2 I(3, 1, 2, 1; - 1 + 2 \epsilon) + 2 I(2, 2, 2, 1; - 1 + 2 \epsilon) \, ,
\eeq
where the symmetry of the integrand under the exchange of $(z_1, z_3)$ with $(z_2, z_4)$ 
has already been taken into account. This system and \eq{2010} allow to find a solution 
for $I(3, 1, 3, 1; 2 \epsilon)$, given by
\beq
\label{I2020}
  I (3, 1, 3, 1; 2 \epsilon) \, = \, \frac{1}{r} \Big[ I(2, 1, 2, 1; 2 \epsilon) -	 I(2, 2, 2, 2; 2 \epsilon) \Big] \, ,
\eeq
involving only integrals allowed in the system. Furthermore, one easily sees that the 
integral $I(2, 2, 2, 2; 2 \epsilon)$ is also involved in the equation 
\beq
\label{dI0101}
  \partial_r I(1, 2, 1, 2; 2 \epsilon) \, = \, - (3 +\epsilon) I(2, 2, 2, 2; 2 \epsilon) \, .
\eeq
The last derivative to be computed in terms of the chosen set of basis integrals
is $\partial_r I(2, 2, 2, 2; 2 \epsilon)$, which is proportional to $I(3, 2, 3, 2; 2 \epsilon)$. 
Using the same procedure adopted so far, it is possible to get a linear system of equation,
whose solution for the desired integral is 
\beq
\label{I2121}
  I(3, 2, 3, 2; 2 \epsilon) \, = \frac{I(2, 1, 2, 1; 2 \epsilon) - I(1, 2, 1, 2; 2 \epsilon) + 
  (3 + \epsilon) (1 + \epsilon + 3 r) I(2, 2, 2, 2; 2 \epsilon)}{(3 + \epsilon)
  (4 + \epsilon) \, r (1 + r)}
\eeq
The system of differential equations for our basis set of integral is now complete, 
and it reads
\beq
\label{system}
  \partial_r \textbf{b} \, \equiv \, \partial_r \left(
  \begin{array}{c}
  I(1, 1, 1, 1; 2 \epsilon) \\
  I(2, 1, 2, 1; 2 \epsilon) \\
  I(1, 2, 1, 2; 2 \epsilon) \\ 
  I(2, 2, 2, 2; 2 \epsilon)
  \end{array}
  \right) =  \left(
  \begin{array}{c c c c}
  0 & - (2 + \epsilon) & 0 & 0 \\
  0 & - \frac{3 + \epsilon}{r} & 0 & - \frac{3  + \epsilon}{r} \\
  0 & 0 & 0 & - (3 + \epsilon) \\
  0 & - \frac{1}{(3 + \epsilon) r ( 1 + r)} & \frac{1}{(3 + \epsilon) r (1 + r)} & - 
  \frac{1 + \epsilon + 3 r}{(3 + \epsilon) r (1 + r)} 
  \end{array}
  \right)
  \textbf{b} \, .
\eeq
Given \eq{system}, one can proceed using standard methods. In particular, \eq{system}
is not in canonical form~\cite{Henn:2013pwa}. Several techniques are available to 
solve this problem~\cite{Lee:2014ioa,Prausa:2017ltv,Gituliar:2017vzm,Lee:2020zfb}. 
Here, we simply follow the method of Magnus exponentiation \cite{Magnus:1954zz}: 
the necessary steps are presented in Appendix~\ref{App2}. Once the system is in 
canonical form, it can be solved iteratively as a power series in $\epsilon$ by standard 
methods\footnote{We notice that the final system of equations that we reach is not 
in `$d \log$' form, which is likely connected to the over-completeness of our basis. 
This is not a problem in this case, since the necessary iteration can be easily 
completed to the desired accuracy.}.

In the spirit of a proof-of-concept, we have not developed a systematic approach to 
the search for useful boundary conditions to determine the unique relevant solution 
of the system. In the case at hand, continuity in $r = -1$, uniform-weight arguments, 
and the known value of the residue of the double pole in $\epsilon$ can be used to 
recover the known solution. We find
\beq
  I_{\rm box} & = & \frac{k(\epsilon)}{r} \bigg[ \frac{1}{\epsilon^2} - \frac{\log r}{2 \epsilon} -
  \frac{\pi ^2}{4} + \epsilon \, \bigg( \frac{1}{2} \, {\rm Li}_3 (- r) - \frac{1}{2} \, {\rm Li}_2 (- r) 
  \log r + \frac{1}{12} \log ^3 r \nonumber \\
  && \hspace{2cm} - \, \frac{1}{4} \log(1 + r) \left(\log ^2 r + \pi ^2 \right) 
  + \frac{1}{4} \, \pi ^2 \log r + \frac{1}{2} \zeta(3) \bigg) \, + \, {\cal O} (\epsilon^2)
  \bigg] \, ,
\eeq
matching the result reported, for example, in Ref.~\cite{Henn:2014qga}. The one to one 
correspondence between the two results is found by setting the overall constant 
$k(\epsilon) \, = \, 4-\frac{\pi^2}{3}\epsilon^2 -\frac{40 \zeta(3)}{3}\epsilon^3$.


\subsection{One-loop massless pentagon}
\label{OlMaP}

We now turn to the natural next step, the one-loop massless pentagon. In dimensional 
regularisation, it is well-known that this integral can be expressed as a sum of one-loop 
boxes with one external massive leg (corresponding to the contraction to a point of one 
of the loop propagators), up to corrections vanishing in $d=4$. In this section, we will
recover this result, showing that, in this case, the method connects to the derivation
of the pentagon integral first reported in Ref.~\cite{Bern:1993kr}. From the point of view
of projective forms, the analysis of this case is interesting because it involves a non-vanishing 
boundary terms, contrary to what happened in \secn{OlMaB} and to the analysis of 
Ref.~\cite{Barucchi:1973zm}. 

Consider then \eq{1LIBP} for a five-parameter integral with $\nu_1 = \nu_2 = \nu_3 = 
\nu_4 = \nu_5 = 1$, and the exponent of the $\mathcal{U}$ polynomial equal to 
$2\epsilon$. Starting with the case $h = 1$, we obtain the equation
\beq
\label{I00000}
  \int_{S_{\left\lbrace 1,2,3,4,5 \right\rbrace}} \! d \omega_3 + s_{13} \, 
  I(1, 1, 2, 1, 1; 2 \epsilon) + s_{14} \, I(1, 1, 1, 2, 1; 2 \epsilon) 
  \, = \, \frac{2 \epsilon}{2 + \epsilon} I(1, 1, 1, 1, 1; - 1 + 2 \epsilon) \, , \qquad
\eeq
with 
\beq
  d \omega_3 \, = \, d \left[ - \, \eta_{\{2,3,4,5\}} \, 
  \frac{(z_1 + z_2 + z_3 + z_4 + z_5)^{2 \epsilon}}{(2 + \epsilon)
  \left( s_{13} z_1 z_3 + s_{14} z_1 z_4 + s_{24} z_2 z_4 + s_{25} z_2 z_5 + 
  s_{35} z_3 z_5 \right)^{2 + \epsilon}} \right] \, .
\label{dw3}
\eeq
Using Stokes theorem, and considering the only subset of the boundary of the 
five-dimensional simplex where $\eta_{\{2,3,4,5\}} \neq 0$, the boundary term 
of this equation becomes 
\beq
  \int_{S_{\{2,3,4,5\}}} \! \eta_{\{2,3,4,5\}} \, 
  \frac{(z_2 + z_3 + z_4 + z_5)^{2 \epsilon}}{\left(s_{24} z_2 z_4 + s_{25} z_2 z_5 + 
  s_{35} z_3 z_5 \right)^{2 + \epsilon}} \, = \,  
  I_{\rm box}^{(1)} (s_{25}) \, ,
\eeq
where $I_{\rm box}^{(1)}$ is a one-loop box integral with one massive external leg,
with a squared mass proportional to $s_{25}$. Effectively, the propagator with index 
$\nu_1$ has been contracted to a point. Note that, when applying Stokes theorem, 
the integration over boundary domains corresponds to the proper integration region, 
needed to obtain the lower-point Feynman integral, up to a sign arising from the 
orientation of the boundary. Reversing this orientation, when needed, produces 
a sign that, for example, cancels the minus sign in \eq{dw3}.

Considering, in a similar way, all the possible values for $h$, the following system 
of equations is obtained
\beq
\label{pentagon system}
  (2 + \epsilon) \!
  \left( \begin{array}{c c c c c}
  0 & 0 & s_{13} & s_{14} & 0 \\
  0 & 0 & 0 & s_{24} & s_{25} \\
  s_{13} & 0 & 0 & 0 & s_{35} \\
  s_{14} & s_{24} & 0 & 0 & 0 \\
  0 & s_{25} & s_{35} & 0 & 0
  \end{array}
  \right) \! \left( \begin{array}{c}
  I(21111;2\epsilon) \\
  I(12111;2\epsilon) \\
  I(11211;2\epsilon) \\
  I(11121;2\epsilon) \\
  I(11112;2\epsilon) \\
  \end{array}
  \right) \! + \!
  \left( \begin{array}{c}
  I_4^{(1)}(s_{25}) \\
  I_4^{(2)}(s_{13}) \\ 
  I_4^{(3)}(s_{24}) \\
  I_4^{(4)}(s_{35}) \\
  I_4^{(5)}(s_{14})
  \end{array}
  \right) \hspace{-0.2pt} = 
  2 \epsilon I(11111;-1+2\epsilon)
  \left( \begin{array}{c}
  1 \\
  1 \\
  1 \\
  1 \\
  1
  \end{array}
  \right) \! , \qquad
\eeq
where the integral $I(1,1,1,1,1; - 1 + 2 \epsilon)$ is proportional to the pentagon 
integral in $d = 6 - 2 \epsilon$. The solution to this system for the pentagon integral 
$I(1,1,1,1,1;1 + 2 \epsilon) = \sum_{i = 1}^5 I(\{i\}_1)$ is 
\beq
\label{pentagon}
  2 (2 + \epsilon) \,
  I(1, 1, 1, 1, 1;1 + 2 \epsilon) & = & 
  \Bigg\{\frac{s_{13} s_{24}-s_{13} s_{25}-s_{14} s_{25}+s_{14} s_{35}-
  s_{24} s_{35}}{s_{13} s_{14} s_{25}} \, I_{\rm box}^{(1)} \nonumber \\ 
  && - \, \frac{s_{13} s_{24}+s_{13} s_{25}-s_{14} s_{25}+s_{14} s_{35}-
  s_{24} s_{35}}{s_{13} s_{24} s_{25}} \, I_{\rm box}^{(2)} \nonumber \\ 
  && - \, \frac{s_{13} s_{24}-s_{13} s_{25}+s_{14}
   s_{25}-s_{14} s_{35}+s_{24} s_{35}}{s_{13} s_{24} s_{35}} \, I_{\rm box}^{(3)} \nonumber \\
  && + \, \frac{s_{13} s_{24}-s_{13} s_{25}+s_{14} s_{25}-s_{14} s_{35}-s_{24}
   s_{35}}{s_{14} s_{24} s_{35}} \, I_{\rm box}^{(4)} \nonumber \\ 
  && - \, \frac{s_{13} s_{24}-s_{13} s_{25}+s_{14} s_{25}+s_{14} s_{35}-
  s_{24} s_{35})}{s_{14} s_{25} s_{35}} \, I_{\rm box}^{(5)} \Bigg\} \nonumber \\ 
  && + \, 2 \epsilon \, I(1, 1, 1, 1, 1; - 1 + 2 \epsilon) \, ,
\eeq
recovering the result of Ref.~\cite{Bern:1993kr}. The correspondence between the 
coefficients reported here and those of Ref.~\cite{Bern:1993kr} can be derived using
the definition $c_i = \sum_{j = 1}^5 S_{ij}$ in their notation. A direct consequence of
\eq{pentagon} is the well-known theorem stating that the one-loop massless pentagon 
can be expressed as a sum of one-loop boxes with an external massive leg, up to 
$O(\epsilon)$ corrections. This last statement is due to the infrared and ultraviolet 
convergence of the $6-2\epsilon$ dimensional pentagon, which implies that the 
last line of \eq{pentagon} is $O(\epsilon)$.


\section{Two-loop examples}
\label{Twolex}

The first Symanzik polynomial for $l$-loop Feynman integrals, with $l > 1$, displays
a much more varied and intricate structure compared to the one-loop case, corresponding
to the factorially growing variety of graph topologies that can be constructed. Some 
classes of diagrams can still be described to all orders: a natural example is given by
the so-called $l$-loop sunrise graphs, depicted in Fig.~\ref{banana}, contributing to 
two-point functions and involving $(l+1)$ propagators. The monodromy ring for these 
graphs was identified in Ref.~\cite{Ponzano:1969tk}, but this result was not (at the 
time) translated into a systematic method to construct differential equations. The 
simplest non-trivial graph of this kind corresponds to $l = 2$, and we will discuss 
it below, in \secn{TlSuI}, in the case in which the masses associated with the three 
propagators are all equal. We will then consider the other non-trivial topology 
contributing to two-point functions at two loops, the five-edge diagram depicted 
in Fig.~\ref{fivedge}.


\subsection{Two-loop equal-mass sunrise integral}
\label{TlSuI}

Sunrise graphs at $l$ loops are characterised by the first Symanzik polynomial
\beq
  \mathcal{U}_l \, = \, \sum_{i = 1}^{l+1} z_1 \ldots \hat{z_i} \ldots z_{l+1} \, ,
\label{banfirstsym}
\eeq
where $\hat{z_i}$ is excluded from the product. Graphs of this class have generated 
a lot of interest in recent years. The two-loop sunrise graph with massive propagators 
is the simplest Feynman integral involving elliptic curves, and has been extensively 
studied both in the equal-mass case and with different internal 
masses~\cite{Broadhurst:1993mw,Muller-Stach:2011qkg,Adams:2013nia,
Adams:2014vja,Bloch:2013tra,Adams:2015gva,Adams:2015ydq,Bloch:2016izu,
Kalmykov:2016lxx,Broedel:2018qkq,Bogner:2019lfa}; furthermore, sunrise diagrams 
with massive propagators at higher loops provide early examples of integrals involving 
higher-dimensional varieties, notably Calabi-Yau manifolds~\cite{Bourjaily:2018yfy,
Broedel:2019kmn,Broedel:2021zij,Bonisch:2021yfw,Bourjaily:2022bwx}.

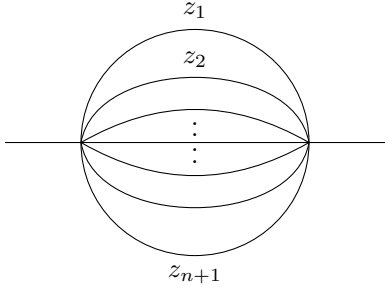
\begin{figure}[!h]
\centering
\begin{tikzpicture}
            \begin{feynman}
            \vertex (a1);
            \vertex[right=1cm of a1] (a2);
            \vertex[above right= 2.12 cm of a2](a3);
            \vertex[below right= 2.12 cm of a2](a4);
            \vertex[below = 1 cm of a3] (a31);
            \vertex[above = 1 cm of a4] (a32);
            \vertex[right=3cm of a2] (a5); 
            \vertex[right=1cm of a5] (a6); 
            \diagram* {
            (a1) -- (a2)
                -- [quarter left] (a3)
                -- [quarter left] (a5)
                -- [quarter left] (a4)
                -- [quarter left] (a2),
                (a5) -- (a6),
                (a2) -- (a5)
            };
            \end{feynman}
            \draw (a2) to[out=30,in=150] (a5);
            \draw (a2) to[out=-30,in=-150] (a5);
            \draw (a2) to[out=80,in=100] (a5);
            \draw (a2) to[out=-80,in=-100] (a5);
            \path
            (2.5,1.75) node {$z_1$}
            (2.5,1.1) node {$z_2$}
            (2.5,0.25) node {.}
            (2.5,0.10) node {.}
            (2.5,-0.25) node {.}
            (2.5,-0.10) node {.}
            (2.5,-1.75) node {$z_{n+1}$};
            \end{tikzpicture}
\caption{\label{banana} Sunrise diagram}
\end{figure}

In our present context, we would simply like to show how the projective framework 
that we are developing leads to the Picard-Fuchs differential equation obeyed by
the (equal-mass) two-loop sunrise integral \cite{Muller-Stach:2011qkg}. To this end, 
consider again equation \eq{Affine example}, which gives the relevant integral. In 
our present notation
\beq
\label{sunrise}
  I \big( \nu_1, \nu_2 , \nu_3 ; \lambda_4 \big) \, = \! \int_{S_{\{1,2,3\}}}
  \frac{\eta_3 \, z_1^{\nu_1 - 1} z_2^{\nu_2 - 1} z_3^{\nu_3 - 1} \, 
  \big( z_1 z_2 +  z_2 z_3 + z_3 z_1)^{\lambda_4}}{\Big[ r\,  z_1 z_2 z_3 - 
  (z_1 + z_2 + z_3) \big( z_1 z_2 + z_2 z_3 + z_3 z_1 \big) 
  \Big]^{\frac{2 \lambda_4 + \nu}{3}}} , \qquad
\eeq
where here $r = \frac{p^2}{m^2}$, and $p^\mu$ is the external momentum. For 
simplicity, we will work in $d = 2$, where the integral is finite both in the ultraviolet 
and in the infrared. In this case, the first Symanzik polynomial drops out, and the 
integrad is simply the inverse of the second graph polynomial. It is important to note 
that for this diagram both Symanzik polynomials vanish when approaching the boundary 
of the simplex $S_{\{1,2,3\}}$, at the points $z_i \to z_j \to 0$ and $z_k \to 1$. In 
principle, this configuration invalidates the application of Stokes theorem, as discussed 
in \secn{FeyPr}, and one needs to introduce a regularisation, for example by deforming 
the boundaries of the simplex near the corners~\cite{Muller-Stach:2011qkg,Adams:2014vja,
Bogner:2019lfa}. In the equal-mass case, the domain deformation can be avoided, 
since the corresponding corrections cancel: we will therefore proceed with the
general method, applying directly \eq{dw}. For an explicit discussion of the differences 
between the two cases, see Ref.~\cite{Weinzierl:2020xyy}.

Continuing with the strategy adopted at one loop, we use the numerator of
\eq{sunrise} (at this stage still for generic $d$) to define
\beq
  H (z) \, = \, z_1^{\nu_1 - 1} z_2^{\nu_2 - 1} z_3^{\nu_3 - 1} 
  \big(z_1 z_2 + z_2 z_3 + z_3 z_1 \big)^{\lambda_4} \, ,
\label{Hsunrise}\
\eeq 
which gives
\beq
  \frac{\partial H}{\partial z_h} \, = \, (\nu_h - 1) \, \frac{H}{z_h} + 
  \lambda_4 \, \frac{H}{\mathcal{U}_2} \, (z_j + z_k) \, , \hspace{1cm} h = 1,2,3 \, , 
  \quad j \neq k \neq h \, .
\label{dH/dx}
\eeq
Furthermore, denoting as before the square bracket in denominator of the 
integrand in \eq{sunrise} by $D(r)$, we find
\beq
  \frac{\partial D(r)}{\partial z_h} \, = \, - 2 \mathcal{U}_2 + (r - 1) z_j z_k  - z_j^2 - z_k^2 
  \, , \hspace{1cm} h = 1,2,3 \, , \quad j \neq k \neq h \, .
\label{dD/dx}
\eeq
Inserting \eq{dH/dx} and  \eq{dD/dx} into \eq{dw}, and picking the appropriate value of 
$P$ to ensure projective invariance, we arrive at the IBP equations
\beq
\label{dwSUN}
  d \omega_2 & = & \frac{3}{2 \lambda_4 + \nu - 1} \,
  \frac{\eta_3}{\big[ D(r) \big]^{\frac{2 \lambda_4 + \nu - 1}{3}}} 
  \left[ (\nu_h - 1) \, \frac{H}{z_h} + \lambda_4 \, \frac{H}{\mathcal{U}} \, (z_j+z_k) \right] 
  + \nonumber \\ 
  & & \hspace{5mm} - \, \frac{\eta_3}{\big[ D(r) \big]^{\frac{2 \lambda_4 + \nu + 2}{3}}}
  \, \big[ - 2 \mathcal{U} + (z - 1) z_j z_k  - z_j^2 - z_k^2 \, \big] \, H 
  \nonumber \\ 
  & = & \frac{3}{2 \lambda_4 + \nu - 1} \Big[ ( \nu_h - 1) \, f \big( \{h\}_{-1} \big) +
  \lambda_4 \, f \big( \{4\}_{-1}, \{j\}_1\big) + \lambda_4 \, f \big( \{4\}_{-1}, \{k\}_1 \big) 
  \Big] + \nonumber \\ 
  & & \hspace{5mm} - \, \Big[ - 2 f \big( \{4\}_1 \big) + (z - 1) f \big( \{j,k\}_1 \big) -
  f \big( \{j\}_2 \big) - f \big( \{k\}_2 \big) \Big] \, ,
\eeq
where in the second step we used the notation for raising and lowering operators 
in the function $f$ as discussed in \secn{OlIBP}. The functions $f$ are also related 
by the identity
\beq
  f \big( \{1,2\}_1 \big) + f \big( \{2,3\}_1 \big) + f \big( \{3,1\}_1 \big) \, =  \, 
  f \big( \{4\}_1 \big) \, .
\label{sunidf}
\eeq
Using the sum rule in \eq{sunidf}, and \eq{dwSUN}, we can build a linear system 
of equations involving the integrals $I(0,0,0,3\epsilon)$, $I(1,0,0,1+3\epsilon)$, 
and a non-vanishing boundary contribution $B$, arising from the IBP relation for 
$I(1,1,0,1+3\epsilon)$ when taking $h = 3$ (at this point, it should be clear that 
boundary terms only survive when $\nu_h = 1$). The linear system is presented in
Appendix~\ref{App4}, and the boundary term contributes to the equation
\beq
\label{J110}
  B & = & \int d \omega_1 \\
  & = & \frac{1 + 3 \epsilon}{1 + \epsilon} \, I(3, 2, 1; 3 \epsilon) + (1 - z) 
  I(3, 3, 1;1 + 3 \epsilon) + 2 I(4, 2, 1;1 + 3 \epsilon) + 2 I(2, 2, 1; 2 + 3 \epsilon)  
  \, , \nonumber
\eeq
where 
\beq
  \int d \omega_1 \, = \, \frac{1}{2 (1 + \epsilon)} \int_{S_{\{1,2\}}} \! \eta_{\{1,2\}} \,
  \frac{(z_1 z_2)^{\epsilon}}{\big[ - (z_1 + z_2) \big]^{2 + 2\epsilon}} \, = \, 
  \frac{(-1)^{2 \epsilon}}{2 + 2 \epsilon} \, 
  \frac{\Gamma^2(1 + \epsilon)}{\Gamma(2 + 2 \epsilon)} \, .
\label{bouter}
\eeq 
Note that the minus sign in the denominator and the factor of $(-1)^{2 \epsilon}$ 
come from the convention of including the masses with a minus sign in the second 
Symanzik polynomial. As stated above, we now set $d = 2$, so that the boundary 
term simply becomes $B = \frac{1}{2}$.

The linear system given in Appendix~\ref{App4} is sufficient to yield the following
non-homogeneous differential equations, involving two master integrals (the third
master integral appears here as the non-vanishing boundary term):
\beq\label{diffSUN}
\begin{cases}
 r \, \frac{d}{dr} I (1, 1, 1; 0) \, = \, I (1, 1, 1; 0) + 3 I (2, 1, 1; 1) \, , \\
 r (r - 1)(r - 9) \, \frac{d}{dr} I(2, 1, 1; 1) \, = \, (3 - r) I(1, 1, 1; 0) + 
 \left(9 - r^2 \right) I(2, 1, 1;1) + 2 r \, ,
\end{cases}.
\eeq
We note that the differential equation system in \eq{diffSUN} is the same 
reported in \cite{Laporta:2004rb}, up to a a different normalisation  of the 
non-homogeneous term, which is solely due to our different normalisation 
of Feynman integrals. This system can be transformed into a single second-order 
differential equation of Picard-Fuchs type by using the {\tt OreSys} package 
for {\tt Mathematica}: the result is
\beq
\label{elliptic}
  && \frac{r}{3} \, \frac{d^2}{d r^2} I(1, 1, 1; 0) \, + \, 
  \left( \frac{1}{3} + \frac{3}{r - 9} + \frac{1}{3 (r - 1)} \right) \frac{d}{d r} I(1, 1, 1; 0) 
  \nonumber \\
  && \hspace{2.6cm} \, - \, \left( \frac{1}{4 (r - 9)} + \frac{1}{12 (r - 1)} \right) I(1, 1, 1; 0) \, = \, 
  \frac{2}{(r - 1)(r - 9)} \, ,
\eeq
corresponding to the elliptic second order differential equation discussed in 
\cite{Broadhurst:1993mw,Laporta:2004rb}, up to our different normalisation. 


\subsection{Two-loop five-edge diagram}
\label{TlFiD}

As a last example, we consider the two-loop, five-edge diagram represented in 
Fig.~\ref{fivedge}, with all internal edges taken to be massless. In this way, the 
only kinematic parameter is the squared momentum $p^2$ carried by the external 
legs. This diagram has been extensively studied, starting with the seminal discussion
in Ref. \cite{Chetyrkin:1981qh}. In this section, the result of \cite{Chetyrkin:1981qh} 
is re-derived by using the parameter-space method presented in this article. 
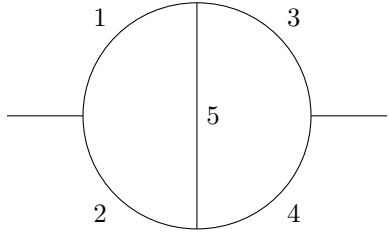
\begin{figure}[!h]
\centering
\begin{tikzpicture}
            \begin{feynman}
            \vertex (a1);
            \vertex[right=1cm of a1] (a2);
            \vertex[above right= 2.12 cm of a2](a3);
            \vertex[below right= 2.12 cm of a2](a4);
            \vertex[below = 1 cm of a3] (a31);
            \vertex[above = 1 cm of a4] (a32);
            \vertex[right=3cm of a2] (a5); 
            \vertex[right=1cm of a5] (a6); 
            \diagram* {
            (a1) -- (a2)
                -- [quarter left, edge label = 1] (a3)
                -- [quarter left,edge label = 3] (a5)
                -- [quarter left,edge label = 4] (a4)
                -- [quarter left,edge label = 2] (a2),
                (a5) -- (a6),
                (a3) -- [edge label = 5] (a4)        
            };
            \end{feynman}
            \end{tikzpicture}
\caption{\label{fivedge} Two-loop five-edge diagram}
\end{figure}

The graph polynomials for this diagram are given by
\beq
\label{U mezza mela}
  \mathcal{U} & = & (z_1 + z_2)(z_3 + z_4) + 
  z_5  \, \sum_{i = 1}^4 z_i \, , \nonumber \\
  \mathcal{F} & = & p^2 \big( z_1 z_2 z_3 + z_1 z_2 z_4 
  + z_1 z_3 z_4 + z_2 z_3 z_4 + \nonumber \\
  && \hspace{20.5mm}
  z_1 z_2 z_5 + z_2 z_3 z_5 + 
  z_3 z_4 z_5 + z_1 z_4 z_5 \big) \, .
\eeq
The two polynomials (and the corresponding Feynman integral) are symmetric 
under the re-labeling 
\beq
  (z_1, z_2) & \longleftrightarrow & (z_4, z_3) \, , \nonumber \\
  (z_1, z_3) & \longleftrightarrow & (z_2, z_4) \, ,
\eeq 
a property which reflects the symmetries of the graph, and which can be used 
to simplify the expressions resulting from the integration by parts identities 
in \eq{dw}. Indeed, it is a virtue of all approaches based on parameter space that
such symmetries under permutations of the graph propagators are manifest from
the beginning, and one does not have to deal with the degeneracy of possible graph 
parametrisations associated with different loop-momentum assignements, as is 
the case in momentum space.

To illustrate the use of these symmetries, consider the integral 
\beq
\label{mezza mela}
  I (1,1,1,1,1; - 1 + 3 \epsilon) \, = \, \int_{S_5} \eta_5 \, \,
  \frac{\mathcal{U}^{- 1 + 3 \epsilon}}{\mathcal{F}^{1+2\epsilon}}
\eeq
which is proportional to the Feynman integral associated with Fig.~\ref{fivedge}. 
\eq{U mezza mela} implies that 
\beq
  I(1,1,1,1,1; - 1 + 3 \epsilon) & = & I(2,1,2,1,1;- 2 + 3 \epsilon) + 
  I(2,1,2,1,1; - 2 + 3 \epsilon) \nonumber \\ 
  & + & I(2,1,2,1,1; - 2 + 3 \epsilon) + I(2,1,2,1,1; - 2 + 3 \epsilon) \nonumber \\
  & + & I(2,1,2,1,1; - 2 + 3 \epsilon) + I(2,1,2,1,2; - 2 + 3 \epsilon) \nonumber \\
  & + & I(2,1,2,1,2; - 2 + 3 \epsilon) + I(2,1,2,1,1; - 2 +  3 \epsilon) \, ,
\eeq
and the use of the symmetry properties of the graph polynomials reduces this 
equation to the much simpler form
\beq
\label{sum rule mm}
  I(1,1,1,1,1; - 1 + 3 \epsilon) & = & 2 I(2,1,2,1,1; - 2 + 3 \epsilon) + 
  2 I(2,1,1,2,1; - 2 + 3 \epsilon) \nonumber \\ 
  && \, + \, 4 I(2,1,1,1,2; - 2 + 3 \epsilon) \, .
\eeq
\eq{sum rule mm} is the first step necessary for reducing integral in \eq{mezza mela} 
to a linear combination of simpler integrals.

Consider now the integration by parts identity in \eq{d eta}, with $h = 1$, and with 
\beq
\label{defomega1}
  \omega_1 & = & - \, \eta_{\{2,3,4,5\}} \, \frac{z_3 \, 
  \mathcal{U}^{- 1 + 3 \epsilon}}{(1 + 2 \epsilon) \, \mathcal{F}^{1 + 2 \epsilon}} \, .
\eeq
One obtains then the integration by part identity
\beq
\label{IBP mm0}
  d \omega_1 \, = \, \frac{\eta_5}{(1 + 2 \epsilon) \, \mathcal{F}^{1 + 2 \epsilon}} \, 
  \frac{\partial}{\partial z_1} \left( z_3 \, \mathcal{U}^{- 1 + 3 \epsilon} \right) - 
  \frac{\eta_5}{\mathcal{F}^{2 + 2 \epsilon}} \left(z_3\, \mathcal{U}^{- 1 + 3 \epsilon}
  \right) \frac{\partial {\mathcal{F}}}{\partial z_1} \, .
\eeq 
Upon integration over the simplex $S_5$, this yields
\beq
\label{IBP mm1}
  \frac{\Omega_1}{1 + 2 \epsilon}  & = & - \frac{1 - 3 \epsilon}{1 + 2 \epsilon} \,
  \Big[ I(1,1,3,1,1; - 2 + 3 \epsilon) + I(1,1,2,2,1; - 2 + 3 \epsilon) + 
  I(1,1,2,1,2; - 2 + 3 \epsilon) \Big] \nonumber \\ 
  && - \, p^2\,  \Big[ I(1,2,3,1,1; - 1 + 3 \epsilon) + I(1,2,2,2,1; - 1 + 3 \epsilon) + 
  I(1,1,3,2,1; - 1 + 3 \epsilon) \nonumber \\ 
  && \hspace{1cm} + I(1,1,2,2,2; - 1 + 3 \epsilon) + I(1,2,2,1,2; - 1 + 3 \epsilon) \Big] \, .
\eeq
The integral $\Omega_1$ can be calculated by means of Stokes' theorem, with 
the result
\beq
  \Omega_1 \, \equiv (1 + 2 \epsilon) \int_{S_5} d \omega_1 \, = \, 
  (1 + 2 \epsilon) \int_{\partial S_5} \omega_1 \, = \, \int_{S_4} \eta_{\{2,3,4,5\}} \,
  \frac{z_3 \, \mathcal{U}^{- 1 + 3 \epsilon}}{\mathcal{F}^{1 + 2 \epsilon}} \, ,
\eeq
where the sign in the definition of $\omega_1$, \eq{defomega1}, is absorbed by 
the boundary $\partial S_5 = - S_{\{2,3,4,5\}}$, since the integrand vanishes on 
the other sub-simplexes comprising $\partial S_5$. The boundary term $\Omega_1$ 
is proportional to the Feynman integral obtained from the diagram in Fig.~\ref{fivedge} 
when the edge labelled 1 shrinks to a point, and the propagator corresponding to 
edge 3 is raised to the power of 2. This integral can be evaluated straightforwardly, 
yielding a product of Gamma functions (see for example Ref.~\cite{Chetyrkin:1981qh}).

A similar strategy can be applied to find the three other equations that are necessary 
to reduce the two-loop five-edge integral to simpler integrals. The resulting equations 
are 
\beq
\label{mm system 1}
  \frac{1}{1 + 2 \epsilon} \, \Omega_2 & = & - \, \frac{1 - 3 \epsilon}{1 + 2 \epsilon} \, 
  \Big[ I(1,1,3,1,1; - 2 + 3 \epsilon) + I(1,1,2,2,1; - 2 + 3 \epsilon)  \nonumber \\ 
  && \hspace{2cm} + \,
  I(1,2,2,1,1; - 2 + 3 \epsilon) + I(1,2,1,2,1; - 2 + 3 \epsilon) \Big] \nonumber \\ 
  && - \, p^2 \, \Big[ I(2,2,1,2,1; - 1 + 3 \epsilon) + I(1,1,3,2,1; - 1 + 3 \epsilon) \nonumber \\
  && \hspace{2cm} + \,
  I(1,2,3,1,1; - 1 + 3 \epsilon) + I(1,2,2,2,1; - 1 + 3  \epsilon) \Big] \, , \\
\label{mm system 2}
  0 & = & \frac{1}{1 + 2 \epsilon} \Big[ I(1,1,1,1,1; - 1 + 3 \epsilon) - 
  (1 - 3 \epsilon) \Big( I(1,2,2,1,1; - 2 + 3 \epsilon) \nonumber \\
  && \hspace{2cm} + \, I(1,2,1,2,1; - 2 + 3 \epsilon) + 
  I(1,1,2,1,2; - 2 + 3 \epsilon) \Big) \Big] \nonumber \\ 
  && \hspace{1cm} - \, p^2 \, \Big[ 2 I(1,2,2,2,1; - 1 + 3 \epsilon) + 
  I(2,2,1,2,1; - 1 + 3 \epsilon) \nonumber \\
  && \hspace{2cm} + \, I(1,2,2,1,2; - 1 + 3 \epsilon) + 
  I(1,1,2,2,2; - 1 + 3 \epsilon) \Big] \, , \\
\label{mm system 3}
  0 & = & \frac{1}{1 + 2 \epsilon} \Big[ I(1,1,1,1,1; - 1 + 3 \epsilon) - 4 (1 - 3 \epsilon)
  I(1,1,2,1,2; - 2 + 3 \epsilon) \Big] \nonumber \\ 
  && \hspace{1cm} - \, p^2 \, \Big[ 2 I(1,1,2,2,2; - 1 + 3 \epsilon) + 
  2 I(1,2,2,1,2; - 1 + 3 \epsilon) \Big] \, .
\eeq
In this case, the boundary term in \eq{mm system 1} is given by
\beq
\label{albound}
  \Omega_2 \, = \, (1 + 2 \epsilon) \int_{S_5} d \omega_2 \, = \, 
  (1 + 2 \epsilon) \int_{\partial S_5} \omega_2 \, = \, 
  \int_{S_4} \eta_{\{1,2,3,4\}} \, \frac{z_2 \, 
  \mathcal{U}^{- 1 + 3 \epsilon}}{\mathcal{F}^{1 + 2 \epsilon}} \, ,
\eeq 
corresponding to the diagram with the edge 5 shrunk to a point. Solving the 
system given by Eqs.~(\ref{mm system 1} - \ref{mm system 3}), together with 
\eq{IBP mm0} and \eq{IBP mm1} leads to the result
\beq
\label{resHA}
  \epsilon \, I(1,1,1,1,1; - 1 + 3 \epsilon) \, = \, - \, \Omega_1 + \Omega_2 \, ,
\eeq
which coincides with the well-known result of~\cite{Chetyrkin:1981qh}. As is 
the case with the momentum-space calculation, we note that in this case
one is actually not employing differential equations, since integration by parts
identities directly yield elementary integrals.


\section{Assessment and perspectives}
\label{AssPer}

In this paper, we have developed a projective framework to derive IBP identities 
and differential equations for Feynman integrals in parameter space, updating and 
extending ideas and results that first emerged half a century ago, prior to
modern developments. We have emphasised the significance of the early 
mathematical results reported in~\cite{Regge:1968rhi,Ponzano:1969tk,
Ponzano:1970ch,Regge:1972ns}, which resonate strikingly with contemporary 
research. These ideas from algebraic topology where turned into a concrete 
application to one-loop diagrams by Barucchi and Ponzano~\cite{Barucchi:1973zm,
Barucchi:1974bf}. In order to apply these results in the modern context, we 
have shown how the analysis extends naturally to dimensional regularisation,
we have generalised the results to the two-loop level (indeed we expect 
the technique to be applicable to all orders), and we have emphasised 
the role played by boundary terms in the IBP identities, noting that they
do not vanish in general, and in fact they provide a useful tool to link 
complicated integrals to simple ones. All these developments have explicitly 
been tested on relatively simple one- and two-loop diagrams, recovering known 
results, including the elliptic differential equation for the equal-mass sunrise diagram.

It is a natural question to ask how this method compares to the usual momentum-space 
approach. Clearly, this question cannot be answered in detail and in quantitative
computational terms at this stage, since this is just an exploratory study, while
momentum-space techniques have been honed through decades of optimisation.
We can however make a few observations already at this stage. 

First of all, it is clear that the parameter-space method offers, to say the least, a 
rather different organisation of the calculation of an integral family, as compared to 
momentum-space algorithms. This should be evident from the concrete cases 
examined in the text: for example, the integral basis arising naturally from the 
Barucchi-Ponzano theorem for the massless box is not the same as the conventional 
one, and the differential equations that emerge are different too~\cite{Henn:2014qga}. 

We note further that the way in which the lattice of different (integer) values of the 
indices $\nu_i$ is explored in parameter space appears different from standard IBPs. 
In the absence of boundary terms, parameter-space IBPs connect integrals with a fixed 
number of external legs, but different space-time dimensions. This is not necessarily 
a positive feature, since the goal of reduction algorithms is to a large extent to 
connect complicated integrals to simpler ones. It must however be noted that, in 
standard algorithms~\cite{Laporta:2000dsw}, the goal of achieving this simplification 
is reached in a rather roundabout way, through the ordering imposed in the recursive 
exploration of the index lattice. In parameter space, this simplifying step is specifically
associated with the novel feature of non-vanishing boundary terms, which give
lower-point integrals. These terms can in principle be reached in a simple way by 
suitably picking the initial values of the indices, as was done for the massless pentagon 
in \secn{OlMaP}. 

Continuing with the comparison, we observe that both the momentum-space 
algorithms and the projective one have a large degree of arbitrariness in their 
initialisation, which leaves room for optimisation. In the present case, there is clearly 
the possibility of many different choices for the functions $H_i (z)$ introduced in 
\eq{dw}. It is quite natural to choose the numerators of the original integral, as
we did, but it would be interesting to explore variations on this theme with an
eye to optimisation. On the other hand, in contrast to momentum-pace algorithms,
we observe that the parameter-space approach bypasses the ambiguity due
to the choice of loop-momentum routing, which can be non-negligible for complicated
diagrams; similarly, the issue of irreducible numerators is implicitly dealt with at
the momentum integration stage. These two aspects are among the consequences
of the fact that parameter space offers a minimal representation of Feynman 
integrals, transparently related to the symmetries of the original Feynman graph.

An especially promising aspect of the projective framework is its close connection
to the most significant algebraic structures associated with Feynman integrals. 
The Barucchi-Ponzano analysis can indeed be seen as an application of the results 
of Ref.~\cite{Ponzano:1970ch}, and it is notable that it succeeds not only in
constructing a system of differential equations for $n$-point one-loop integrals, 
but also in setting a bound on the size of the system, guaranteeing its closure, 
and providing an algorithmic construction. This is to be contrasted with the very 
large size of the systems of IBP identities that emerge in the intermediate stages 
of calculations in standard algorithms. It is clearly a goal of future research to 
extend these techniques and the analysis of Regge and collaborators to more 
complicated two- and higher-point integrals. In particular, studies on three-loop 
two-point functions and on two-loop three-point functions are currently ongoing, 
and steps towards the automation of the generation of IBPs in the projective 
framework are under way, with the goal of reaching state-of-the-art topologies 
such as two-loop penta- and hexa-boxes and three-loop four-point functions. When
complex multi-scale examples of this kind become available, a more thorough
comparison of the two approaches, including computational aspects, will
become possible.


\vspace{1cm}


\section*{Acknowledgments}

LM would like to thank the Regge Center for Algebra, Geometry and Theoretical 
Physics (then known as Arnold-Regge Center) for providing the opportunity
to discover Refs.~\cite{Regge:1968rhi,Ponzano:1969tk,Ponzano:1970ch,
Regge:1972ns}, as recounted in~\cite{DelDuca:2018nsu}. DA would like to 
thank Lina Alasfar, Dirk Kreimer, Till Martini, Jasper R. Nepveau, Maria C. 
Sevilla, Markus Schulze, Peter Uwer and Yingxuan Xu for the discussions 
and feedback during groups seminars and interships. We thank Simon
Badger, Stefan Weinzierl and Ben Page for useful discussions during the 
development of this project. Research supported in part by the Italian Ministry 
of University and Research (MIUR), under grant PRIN 20172LNEEZ. DA is 
funded by the Deutsche Forschungsgemeinschaft (DFG, German Research 
Foundation) - Projektnummer 417533893/GRK2575 "Rethinking Quantum 
Field Theory". 


\vspace{2cm}


\appendix


\section{IBPs for the one-loop massless box}
\label{App1}

Here we briefly present the system of linear equations necessary to close the 
differential equation system in \eq{system} for the one-loop massless box diagram. 
Our chosen basis integrals are $I(1,1,1,1; 2 \epsilon)$, $I(2,1,2,1; 2 \epsilon)$, 
$I(1,2,1,2; 2 \epsilon)$ and $I(2,2,2,2; 2 \epsilon)$: all the other relevant integrals
are determined by the linear system presented below. We find
\beq
\label{boxsystem}
  r \, I(3,1,3,1; 2 \epsilon) & = & \frac{1}{3 + \epsilon} \Big[ 2 I(2,1,2,1; 2 \epsilon) +
  2 \epsilon I(3,1,2,1; - 1 + 2 \epsilon) \Big] \, , \nonumber  \\
  I(2,2,2,2; 2 \epsilon) & = & \frac{1}{3 + \epsilon} \Big[ I(2,1,2,1; 2 \epsilon) + 
  2 \epsilon I(2,2,2,1; - 1 + 2 \epsilon) \Big] \, , \nonumber \\
  I(2,1,2,1; 2 \epsilon) & = & 2 I(2,2,2,1; - 1 + 2 \epsilon) + 2 I(3,1,2,1; - 1 + 2 \epsilon)  
  \, , \nonumber \\
  r \, I(3,2,3,2; 2 \epsilon) & = & \frac{1}{4 + \epsilon} \Big[ 2 I(2,2,2,2; 2 \epsilon) + 
  2 \epsilon I(3,2,2,2; - 1 + 2 \epsilon) \Big]  \, , \nonumber \\   
  I(2,2,2,2; 2 \epsilon) & = & 2 I(2,3,2,2; - 1 + 2 \epsilon) + 2 I(3,2,2,2; - 1 + 2 \epsilon) 
  \, , \nonumber \\
  I(2,3,2,3; 2 \epsilon) & = & \left( \frac{1}{2} + \epsilon \right) I(2,2,2,2; 2 \epsilon) +
  \frac{2 \epsilon I(2,3,2,2; - 1 + 2 \epsilon)}{4 + \epsilon} \, , \nonumber \\
  r \, I(3,2,2,2; - 1 + 2 \epsilon) & = & \frac{1}{3 + \epsilon} \Big[ I(2,2,1,2; - 1 + 2 \epsilon)
  - (1 - 2 \epsilon) I(2,2,2,2; - 2 + 2 \epsilon) \Big] \, , \nonumber \\
  I(2,3,2,2; - 1 + 2 \epsilon) & = & \frac{1}{3 + \epsilon} \Big[ I(2,2,2,1; - 1 + 2 \epsilon)
  - (1 - 2 \epsilon) I(2,2,2,2; - 2 + 2 \epsilon) \Big] \, , \nonumber \\
  r \, I(2,2,2,2; 2 \epsilon) & = & \frac{1}{3 + \epsilon} \Big[ I(1,2,1,2; 2 \epsilon) + 
  2 \epsilon I(2,2,1,2; - 1 + 2 \epsilon) \Big] \, .
\eeq
These are nine equations involving twelve independent integrals, to which one must
add the original integrals to be determined, $I(1,1,1,1; 2 \epsilon)$. The system is of 
course easily solved with elementary methods.


\section{Magnus exponentiation}
\label{App2}

In this Appendix, we will briefly review the Magnus exponentiation technique for 
solving systems of linear differential equations, and it application to the massless
box in \secn{OlMaB}. In general, one may consider a system of differential equations
of the form
\beq
\label{gensyst} 
  \partial_r \mathbf{b} (r) \, = \, M (r, \epsilon) \, \mathbf{b} (r) \, ,
\eeq
where $\mathbf{b} (r)$ is a vector of functions of $r$, and the matrix $M$ can 
be written as $M (r, \epsilon) \, = \, A (r) + \epsilon B(r)$.

In order to reduce the system to canonical form, consider a change of basis 
$\textbf{b}(r) \, = \, C(r) \textbf{b}' (r)$ where the matrix $C$ can depend on 
$r$ but not on $\epsilon$. The system for the vector $\textbf{b}' (r)$ is then 
determined by the matrix
\beq
\label{M'}
  M' (r, \epsilon) \, = \, C^{-1} (r) A (r) C(r) - C^{-1} (r) \partial_r C(r) 
  + \epsilon \, C^{-1} (r) B(r) C (r) \, .
\eeq
If one picks $C(r)$ such that $\partial_r C(r) \, = \, A(r) C(r)$, the system is 
reduced to canonical form. The general solution to this problem was reported 
in \cite{Magnus:1954zz}, and can be expressed by a formal expansion in $A(r)$, 
as
\beq
\label{solMagn}
  C (r) \, = \, \exp \left[ \int_{r_0}^r A(t) dt + \frac{1}{2} \int_{r_0}^r  dt_1\int^{t_1}_{r_0} 
  dt_2 \big[ A(t_1), A(t_2) \big] + \, \ldots \right] C_0 (r) \, .
\eeq
Since the goal is simply to eliminate the $\epsilon$-independent term, there 
is considerable freedom in choosing the base point $r_0$ and the matrix $C_0(r)$. 
In particular, the series reduces to a finite sum if the matrix $A(r)$ is upper triangular. 
In the specific case of massless box, \eq{system}, we can then proceed in steps. 
With a first change of basis, we make the $A$ matrix upper triangular. This is 
achieved with the rotation 
\beq
\label{Ctriang}
  C_{\rm tr} (r) \, = \, 
  \left(
  \begin{array}{cccc}
    \frac{1}{\epsilon^2 r} & 0 & 0 & 0 \\
    0 & \frac{1}{(2 + \epsilon) \epsilon^2 r^2} & 0 & 0 \\
    0 & \frac{1}{(2 + \epsilon) \epsilon^2 r} & - \frac{2}{(2 + \epsilon) \epsilon^2 r} & 0 \\
    0 & \frac{1}{(3 + \epsilon) (2 + \epsilon) \epsilon^2 r^2} & - \frac{1}{(3 + \epsilon) 
    (2 + \epsilon) \epsilon^2 r^2} &    \frac{1}{(3 + \epsilon) (2 + \epsilon) \epsilon^2 r^2} \\
  \end{array}
  \right) \, ,
\eeq
which reduces the system to
\beq
\label{triang sys}
  \partial_r \textbf{b}' (r) \, = \, \left[ \left(
  \begin{array}{cccc}
    \frac{1}{r} & - \frac{1}{r} & 0 & 0 \\
    0 & 0 & - \frac{1}{r} & \frac{1}{r} \\
    0 & 0 & 0 & \frac{1}{r} \\
    0 & 0 & 0 & \frac{1 - r}{r (1+ r)} \\
  \end{array}
  \right) 
  + \epsilon 
  \left(
  \begin{array}{cccc}
    0 & 0 & 0 & 0 \\
    0 & - \frac{1}{r} & 0 & 0 \\
    0 & - \frac{1}{2 r} & 0 & 0 \\
    0 & - \frac{1 - r}{2 r (1 + r)} & \frac{1}{r (1 + r)} & - \frac{1}{r (1 + r)} \\
  \end{array}
  \right) \right] \textbf{b}' (r) \, .
\eeq
The diagonal part $D(r)$ of the $\epsilon$-independent term is removed by the matrix 
\beq
\label{Cdiag}
  C_{\rm d} (r) \, = \, \exp \left[ \int_0^r D(t) dt \right] \, = \, 
  \left(
  \begin{array}{cccc}
    r & 0 & 0 & 0 \\
    0 & 1 & 0 & 0 \\
    0 & 0 & 1 & 0 \\
    0 & 0 & 0 & \frac{r}{(r+1)^2} \\
  \end{array}
  \right) \, ,
\eeq
which leads to the system
\beq
  \partial_r \textbf{b}'' (r) \, = \, \left[ \left(
  \begin{array}{cccc}
    0 & - \frac{1}{r^2} & 0 & 0 \\
    0 & 0 & - \frac{1}{r} & \frac{1}{(1 + r)^2} \\
    0 & 0 & 0 & \frac{1}{(1 + r)^2} \\
    0 & 0 & 0 & 0 \\
  \end{array}
  \right) 
  + \epsilon 
  \left(
  \begin{array}{cccc}
    0 & 0 & 0 & 0 \\
    0 & - \frac{1}{r} & 0 & 0 \\
    0 & - \frac{1}{2 r} & 0 & 0 \\
    0 & - \frac{(1 - r) (1 + r)}{2 r^2} & \frac{1 + r}{r^2} & - \frac{1}{r (1 + r)} \\
  \end{array}
  \right) \right] \textbf{b}'' (r) \, .
\eeq
One may now directly apply Magnus' theorem, with the final change of basis 
given by
\beq
\label{Cred}
  C_{\rm red} (r) \, = \, \left(
  \begin{array}{cccc}
    1 & \frac{1}{r} - 1 & \frac{r - \log r - 1}{r} & \frac{r - 2 \log [ (1+ r)/2 ] - 1}{2 r} \\
    0 & 1 & - \log r & \frac{r - 2 (r+1) \log [ (1+ r)/2 ] - 1}{2 (1 + r)} \\
    0 & 0 & 1 & \frac{r}{1 + r} \\
    0 & 0 & 0 & 1 \\
  \end{array}
  \right) \, .
\eeq
After this last step, the system is finally reduced to its canonical form, which can 
be solved iteratively. We write
\beq
\label{dF}
  \partial_r \textbf{b}''' (r) \, = \, \epsilon \, H (r) \textbf{b}''' (r) \, ,
\eeq
where the matrix $H$ is presented below, in Appendix~\ref{App3}.


\section{The matrix $H$ for the massless box in canonical form}
\label{App3}

The procedure discussed in Appendix \ref{App2} leads to a matrix containing at most
logarithms of the kinematical variable $s/t$. The matrix elements are given below.
\small
$$
\begin{aligned}
  & H_{11} (r) \, = \, 0 \, , \\
  & H_{12} (r) \, = \, \frac{1}{4} \frac{r^2 + 3}{r^2} \, , \\
  & H_{13} (r) \, = \, - \frac{1}{4} \frac{r^2 \ln r + 3 \ln r - 2 r + 2}{r^2} \, , \\
  & H_{14} (r) \, = \, \frac{1}{8} \frac{1}{r^2 (r+1)^2} \left[ r^4 \ln r - 
  2 r^4 \ln \frac{r + 1}{2} + 2 r^3 \ln r - 4 r^3 \ln \frac{r + 1}{2} + r^4 + 
  4 r^2 \ln r - 8 r^2 \ln \frac{r + 1}{2} + 2 r^{3} \right. \\
  & \hspace{3.5cm} \left. + \, 6 r \ln r -12 r \ln \frac{r + 1}{2} - 4 r^2 + 3 \ln r - 
  6 \ln \frac{r + 1}{2} + 2 r - 1 \right] \, , \\
  & H_{21} (r) \, = \, 0 \, , \\
  & H_{22} (r) \, = \, - \frac{1}{4 r^2} \left[ 2  r^2 \ln r - 2 r^2 \ln \frac{r + 1}{2} +
  r^2 + 2 \ln \frac{r + 1}{2} + 2 r + 1 \right] \\
  & H_{23} (r) \, = \, \frac{1}{4 r^2} \left[ 2 r^2 \ln^2 r - 2 r^2 \ln r \ln \frac{r + 1}{2} +
   r^2 \ln r + 2 \ln r \ln \frac{r + 1}{2} - 2 r \ln r + 4 r \ln \frac{r + 1}{2} + \ln r \right. \\
  & \hspace{2.4cm} \left. + \, 4 \ln \frac{r + 1}{2} - 2 r + 2 \right] \, , \\
  & H_{24} (r) \, = \, \frac{1}{2 (r + 1)} \left( r \ln r - 2 r \ln \frac{r + 1}{2} + \ln r - 
  2 \ln \frac{r + 1}{2} + r - 1 \right) \left[ - \frac{1}{r} - \frac{\ln r}{2 r} - 
  \frac{r - 1}{4 r^2} \bigg( 2 r \ln r \right. \\
  & \hspace{1.5cm} \left. - \, 2 r \ln \frac{r + 1}{2} - 2 \ln \frac{r + 1}{2} + r - 1 
  \bigg) \right] - \frac{(r - 1)}{4 r^2 (r + 1)} \left( 2 r \ln r - 2 r \ln \frac{r + 1}{2} - 
  2 \ln \frac{r + 1}{2} + r - 1 \right) \\
  & \hspace{1.5cm}  + \, \frac{1}{2 r (r + 1)^2} \left( 2 r \ln r - 2 r \ln \frac{r + 1}{2} - 
  2 \ln \frac{r + 1}{2} + r - 1 \right) \, , \\
  & H_{31} (r) \, = \, 0 \, , \\
  & H_{32} (r) \, = \, - \frac{r^2+1}{4 r^2} \, , \\
  & H_{33} (r) \, = \, \frac{r^2 \ln r + \ln r - 2 r + 2}{4 r^2} \, , \\
  & H_{34} (r) \, = \, - \frac{1}{8 r^2 (r+1)^2} \left[ r^4 \ln r - 2 r^4 \ln \frac{r + 1}{2} +
  2 r^3 \ln r - 4 r^3 \ln \frac{r + 1}{2} + r^4 + 2 r^2 \ln r - 4 r^2 \ln \frac{r + 1}{2} \right. \\
  & \hspace{3.8cm} \left. + 2 r^{3} + 2 r \ln r - 4 r \ln \frac{r + 1}{2} - 6 r^2 + \ln r - 
  2 \ln \frac{r + 1}{2} + 2 r + 1 \right] \\
  & H_{41} (r) \, = \, 0 \, , \\
  & H_{42} (r) \, = \, \frac{r^2 - 1}{2 r^2} \, , \\
  & H_{43} (r) \, = \, - \frac{r^2 \ln r - \ln r - 2 r - 2}{2 r^2} \, , \\
  & H_{44} (r) \, = \, \frac{r - 1}{4 r^2} \left[ r \ln r - 2 r \ln \frac{r + 1}{2} + \ln r - 
  2 \ln \frac{r + 1}{2} + r + 1 \right] - \frac{1}{r (r+1)} \, .
\end{aligned}
$$
Once the matrix $H$ is known, the solution fo the differential equation for the 
massless box can be determined by iteration in $\epsilon$ by standard methods.


\section{IBPs for the two-loop sunrise integral}
\label{App4}

Here we present the linear system necessary to close the system of differential 
equations in \eq{diffSUN}. Our chosen basis integrals are $I(111,3\epsilon)$, 
$I(211,1+3\epsilon)$, and the boundary contribution  $B$. All other relevant
integrals are determined by the following set of IBP equations.
\beq
\label{surisyst}
  I(1,1,1; 3 \epsilon) - r I(2,2,2; 3 \epsilon) + 3 I(2,1,1; 1 + 3 \epsilon) 
  & = & \, 0 \, , \qquad \nonumber \\
  I(2,2,2; 3 \epsilon) + 2 I(3,2,1; 3 \epsilon) - I(2,1,1; 1 + 3 \epsilon) 
  & = & 0 \, , \qquad \nonumber \\
  \frac{1}{2 + 2 \epsilon} \Big[ (1 + 3 \epsilon) \big( I(2,2,2; 3 \epsilon) +  
  I(3,2,1; 3 \epsilon) \big) + I(2,1,1; 1 + 3 \epsilon) \Big] \hspace{4cm}
  && \qquad \nonumber \\ + \,  
  (1 - r) I(3,2,2; 1 + 3 \epsilon) + I(3,2,2; 1 + 3 \epsilon) + I(4,2,1; 1 + 3 \epsilon) 
  + 2 I(2,2,1; 2 + 3 \epsilon) & = & 0 \, , \qquad \nonumber \\
  - 2 I(3,2,2; 1 + 3 \epsilon) - I(3,3,1; 1 + 3 \epsilon) + I(2,2,1; 2 + 3 \epsilon) 
  & = & 0 \, , \qquad \nonumber \\
  \frac{1 + 3 \epsilon}{1 + \epsilon} I(3,2,1; 3 \epsilon) + 
  (1 - r) I(3,3,1; 1 + 3 \epsilon) + 2 I(4,2,1; 1 + 3 \epsilon) + 2 I(2,2,1; 2 + 3 \epsilon) 
  & = & B \, , \qquad \nonumber \\
  \frac{1}{1 + \epsilon} \Big[ (1 + 3 \epsilon) I(3,2,1; 3 \epsilon) + 
  I(2,1,1; 1 + 3 \epsilon) \Big]  \hspace{4cm}
  && \qquad \\ 
  + \, (1 - r) I(3,2,2; 1 + 3 \epsilon) + 
  2 I(3,3,1; 1 + 3 \epsilon) + 2  I(3,1,1; 2 + 3 \epsilon) & = & 0 \, , \qquad \nonumber \\
 - I(3,2,2; 1 + 3 \epsilon) - 2 I(4,2,1; 1 + 3 \epsilon) +  
 I(3,1,1; 2 + 3 \epsilon) & = & 0 \, . \qquad \nonumber
\eeq
These are seven equations involving nine independent integrals, two of which are
the chosen basis integrals. The system is of course easily solved with elementary 
methods.


\vspace{1cm}



\end{document}